\documentclass[a4paper,UKenglish,cleveref,autoref,final]{lipics-v2019} 

\usepackage[utf8]{inputenc}
\usepackage{paralist}
\usepackage{amsmath,amssymb}
\usepackage[obeyFinal]{todonotes}

\usepackage{tikz}
\usepackage{pgflibrarysnakes}
\usetikzlibrary{fit,calc,positioning,decorations,automata}
\tikzset{>=latex,every state/.style={inner sep=1pt,minimum size=2em}}






\newcommand{\dyn}[1]{\ensuremath{\xrightarrow{\mathsf{#1}}}}

\newcommand{\Odyn}{\dyn{1}}
\newcommand{\POdyn}{\dyn{P1}}

\newcommand{\PCdyn}{\dyn{PC}}
\newcommand{\BPCdyn}{\dyn{bPC}}
\newcommand{\BPOdyn}{\dyn{bP1}}

\newcommand{\graph}[2]{#1\langle #2\rangle}

\newcommand{\eq}{\sim}



\usepackage{xspace}
\newcommand\OTG{\ensuremath{\mathrm{1TG}}\xspace}
\newcommand\NOTG{\ensuremath{\mathrm{N1TG}}\xspace}

\newcommand\dom{\mathrm{dom}}

\newcommand{\game}{\mathcal{G}}

\newcommand{\play}{\mathrm{Play}}
\newcommand{\hist}{\mathrm{Hist}}
\newcommand{\last}[1]{\mathop{\mathrm{last}}(#1)}
\newcommand{\first}[1]{\mathop{\mathrm{first}}(#1)}
\newcommand{\player}[1]{\mathop{\mathrm{player}}(#1)}
\newcommand{\playP}{\mathrm{Play}^{\mathsf{P}}}

\newcommand{\Outc}[2]{\ensuremath{\mathsf{Outcome}\left(#1,#2\right)}}

\newcommand{\pstrat}{\ensuremath{\Sigma}^\mathsf{P}}

\newcommand{\DWAB}{Strong Dispute Wheel}

\newcommand{\Target}{\ensuremath{V_{\bot}}}
\newcommand{\target}{\ensuremath{v_{\bot}}}
\newcommand{\perm}{\ensuremath{\mathcal{P}}}
\newcommand{\nperm}{\ensuremath{\mathcal{P}^c}}

\newcommand{\Lfair}{$L$-fair}

\newcommand\edge[2]{#2_{#1}}

\usepackage{xparse}

\title{Dynamics on Games: Simulation-Based Techniques and Applications
to Routing} 

\author{Thomas Brihaye}{Université de Mons, Mons,
  Belgium}{thomas.brihaye@umons.ac.be}{
}{
}

\author{Gilles Geeraerts}{Université libre Bruxelles, Brussels, Belgium}{gigeerae@ulb.ac.be}{
}{
}

\author{Marion Hallet}{Université de Mons, Mons,
  Belgium\\Université libre Bruxelles, Brussels, Belgium}{marion.hallet@umons.ac.be}{
}{
}

\author{Benjamin Monmege}{Aix Marseille Univ, CNRS, LIS, Université de Toulon,
    France}{benjamin.monmege@univ-amu.fr}{https://orcid.org/0000-0002-4717-9955}{}
  
\author{Bruno Quoitin}{Université de Mons, Mons,
  Belgium}{bruno.quoitin@umons.ac.be}{}{}

\authorrunning{T. Brihaye, G. Geeraerts, M. Hallet, B. Monmege, B. Quoitin}

\Copyright{Thomas Brihaye, Gilles Geeraerts, Marion Hallet, Benjamin Monmege, Bruno Quoitin}

\ccsdesc[500]{Theory of computation~Algorithmic game theory}

\keywords{games on graphs, dynamics, simulation, network}
\funding{
  G.~Geeraerts is supported by an ARC grant of the Fédération
  Wallonie-Bruxelles; M.~Hallet is supported by an FNRS grant and an
  UMONS grant of the Fonds Franeau Mobilité; B.~Monmege is partially
  supported by the DeLTA project (ANR-16-CE40-0007) and by ANR project
  Ticktac (ANR-18-CE40-0015).  This article is based upon work from
  COST Action GAMENET CA 16228 supported by COST (European Cooperation
  in Science and Technology). }

\acknowledgements{We thank Timothy Griffin and Marco Chiesa for fruitful discussions.}

\EventEditors{Arkadev Chattopadhyay and Paul Gastin}
\EventNoEds{2}
\EventLongTitle{39th Annual Conference on
Foundations of Software Technology and Theoretical Computer Science
(FSTTCS'19)}
\EventShortTitle{FSTTCS 2019}
\EventAcronym{FSTTCS}
\EventYear{2019}
\EventDate{December 11--13, 2019}
\EventLocation{Bombay, India}
\EventLogo{}
\SeriesVolume{42}
\ArticleNo{23}



\EventEditors{Arkadev Chattopadhyay and Paul Gastin}
\EventNoEds{2}
\EventLongTitle{39th IARCS Annual Conference on Foundations of Software Technology and Theoretical Computer Science (FSTTCS 2019)}
\EventShortTitle{FSTTCS 2019}
\EventAcronym{FSTTCS}
\EventYear{2019}
\EventDate{December 11--13, 2019}
\EventLocation{Bombay, India}
\EventLogo{}
\SeriesVolume{150}
\ArticleNo{35}

\begin{document}

\maketitle

\begin{abstract}
  We consider multi-player games played on graphs, in which the
  players aim at fulfilling their own (not necessarily antagonistic)
  objectives. In the spirit of evolutionary game theory, we suppose
  that the players have the right to repeatedly update their
  respective strategies (for instance, to improve the outcome w.r.t.\
  the current strategy profile). This generates a dynamics in the game
  which may eventually stabilise to an equilibrium. The objective of
  the present paper is twofold. First, we aim at drawing a general
  framework to reason about the termination of such
  dynamics. In~particular, we identify preorders on games (inspired
  from the classical notion of simulation between transitions systems,
  and from the notion of graph minor) which preserve termination of
  dynamics. Second, we show the applicability of the previously
  developed framework to interdomain routing~problems.
\end{abstract}

\section{Introduction}

Games are nowadays a well-established model to reason about several
problems in computer science. In the game paradigm, several agents
(called \emph{players}) are assumed to be rational, and interact in
order to reach a fixed objective. As such, games have found numerous
applications, such as controller synthesis \cite{PR89,Th95} or network
protocols \cite{NRTV07}.  In this paper, we are mainly concerned about
\emph{multi-player games played on graphs}, in which $n\geq 2$ players
interact trying to fulfil their own objectives (which are not
necessarily antagonistic to the others); and where the arena
(defining the possible actions of the players) is given as a finite
graph.

An example of such game is given in
\figurename~\ref{Fig:DISAGREE+graphDyn}, modelling an instance of an
\emph{interdomain routing problem} which is typical of the Internet.
In this case, two service providers $v_1$ and $v_2$ want to route
packets to a target node $v_\bot$ through the links that are
represented by the graph edges. For economical reasons, $v_1$ prefers
to route the traffic to $v_\bot$ through $v_2$ (using path $c_1s_2$)
instead of sending them directly to $v_\bot$, and symmetrically for
$v_2$. (Assume for instance that both $v_1$ and $v_2$ are located in
Europe, and that $v_\bot$ is in America. Then, $s_1$ and $s_2$ are
transatlantic links that incur a huge cost of operation for the origin
nodes.) Then, assume that, initially, $v_1$ and $v_2$ route the
packets through $s_1$ and $s_2$ respectively, and broadcast this
information through the network. When $v_1$ becomes aware of the
choice of $v_2$, he could decide to rely on the $c_1$ link instead,
trying to route his packets through $v_2$. However, due to the
asynchronous nature of the network, $v_2$ could decide to route
through $c_2$ before the new choice of $v_1$ reaches it. Hence, the
packets get blocked in a cycle $c_1c_2c_1\cdots$ and do not reach
$v_\bot$ anymore. Then, $v_1$ and $v_2$ could decide simultaneously to
reverse to $s_1$ and $s_2$ respectively which brings the network to
its initial state, where the same behaviour can start again.
Clearly, such oscillations in the routing policies must be avoided.

This simple example illustrates the main notions we will consider in
the paper. We study the notion of \emph{dynamics in games}, which
model the behaviour of the players when they repeatedly update their
strategy (i.e.\ their choices of actions) in order to achieve a better
outcome.  Then, the main objectives of the paper are to \emph{draw a
  general framework to reason about the termination of such dynamics}
and to \emph{show its applicability to interdomain routing problems}
(as sketched above). We say that a dynamics terminates when the
players converge to an \emph{equilibrium}, i.e.\ a state in which they
have no incentive to further update their respective strategies. Our
framework is introduced in Section~\ref{Sec:Graphs}
and~\ref{Sec:Game}. It relies on notions of \emph{preorders}, in
particular the \emph{simulation} preorder~\cite{Milner}. Simulations
are usually defined on transition systems: intuitively, a system $A$
simulates a system $B$ if each step of $B$ can be mimicked in $A$. We
consider two kinds of preorders: preorders defined on \emph{game
  graphs}, i.e.\ on the structure of the games; and simulation defined
on the \emph{dynamics}, which are useful to reason about termination
(indeed, if a dynamics $D_1$ simulates a dynamics $D_2$, and if $D_1$
terminates, then $D_2$ terminates as well). We show how the existence
of a relation between \emph{game graphs} implies the existence of a
simulation between the \emph{induced dynamics} of those games
(Theorem~\ref{Thm:minor of game=>simulation of graph},
Theorem~\ref{theo:fair-termination}). This technique allows us to
\emph{check the termination of the dynamics using structural criteria
  about the game graph}.

The motivation of this framework comes from several examples of
problems in the literature
\cite{Griffin02,SamiSchapiraZohar,LeRouxDynamics,articleGandalf} that
are (sometimes implicitly) reduced to checking the termination of a
dynamics in a multi-player game, and where sufficient criteria are
proposed that can be expressed as the existence of a preorder between
game graphs.  We thus seek to unify these results, hoping that our
framework will foster new applications of the game model. For
instance, several sufficient conditions for termination in the network
problem sketched above consist in checking that the game graph does
not contain a \emph{forbidden pattern}~\cite{Griffin02}. This
containment can naturally be expressed as a preorder.

To this aim, we introduce, in Section~\ref{Sec:Game} a preorder
relation on game graphs, which is inspired from the classical notion
of \emph{graph minor} \cite{Lo06}. Intuitively, a game graph $\game'$
is a minor of $\game$ if $\game'$ can be obtained by deleting edges
and vertices from $\game$ (under well-chosen conditions that are
compatible with the game setting). Then, the relation `is a minor of'
forms a preorder relation on game graphs and allows one to reason on
the termination of dynamics (see Theorem~\ref{Thm:minor of
  game=>simulation of graph} and Theorem~\ref{theo:fair-termination}).

Finally, in Section~\ref{sec:applications}, we achieve our second
objective, by casting questions about Interdomain Routing into our
framework. Interdomain Routing is the process of constructing routes
across the networks that compose the Internet. The Border Gateway
Protocol (BGP), is the \textit{de facto} standard interdomain routing
protocol. As sketched in the example above, it grows a routing tree
towards every destination network in a distributed manner. The example
also shows that the behaviour of the BGP is naturally modelled as a
game, as already pointed out before (see
\cite{FabrikantPapadimitriou,SamiSchapiraZohar} for example). In
particular, checking for so-called \emph{safety} (does the protocol
always converge to a stable state?) amounts to checking termination of
some dynamics. In Section~\ref{sec:applications}, we formally express
BGP in our game model; revisit a classical result of Sami \textit{et
  al.}  that we re-prove within our framework; and finally obtain a
new result regarding BGP: we provide a novel necessary and sufficient
condition for convergence in the restricted (yet realistic) setting
where the preferences of the nodes range on the next-hop in the route
only.


Due to space constraints, full proofs and some examples can be found
in Appendix.


\section{Preliminaries}\label{Sec:Prelim}

\subparagraph{Graphs.} A (directed) \emph{graph} is a pair $G=(V,E)$
where $V$ is a set of \emph{states} (or \emph{nodes}),
$E\subseteq V\times V$ is the set of \emph{edges}. A \emph{labelled
  graph} is a tuple $G=(V,E,L)$ where $(V,E)$ is a graph, and
$L:E\rightarrow S$ is a function associating, to each edge $e$, a
label $L(e)$ from a set $S$ of labels. A~(labelled) graph $G$ is
finite iff $V$ is finite. A \emph{path} in a (labelled) graph $G$ is a
finite sequence $v_1v_2\cdots v_k$ or an infinite sequence
$v_1v_2\cdots v_i\cdots$ of states such that $(v_i,v_{i+1})\in E$ for
all $i$. We denote $v_1$, the first state of a path $\pi$, by
$\first{\pi}$. When $\pi=v_1v_2\cdots v_k$ is finite, we let
$\last{\pi}=v_k$.  We let
$V_\bot=\{v\in V\mid \text{there is no } v': (v,v')\in E\}$ be the set
of \emph{terminal states}. We say that a path $\pi$ is \emph{maximal}
iff: either $\pi$ is infinite, or $\pi$ is finite and
$\last{\pi}\in V_\bot$. Let $\pi_1=v_1\cdots v_k$ and
$\pi_2=u_1u_2\cdots$ be two paths such that $(v_k,u_1)\in E$. Then, we
write $\pi_1\pi_2$ to denote the new path $v_1\cdots v_ku_1u_2\cdots$,
obtained by the concatenation of $\pi_1$ and $\pi_2$.\todo{M: la
  concaténation c'est pas quand $last(\pi_1)=first(\pi_2)?$}

Following automata terminologies, a labelled graph $G$ is said to be
\emph{complete deterministic} if for every state $v$ and label $\ell$,
there is exactly one edge $(v,v')$ s.t. $L(v,v')=\ell$.

\subparagraph{Games played on graphs.} An \emph{$n$-player game} is a
tuple
$\game=(V, E, (V_i)_{1\leq i\leq n}, (\preceq_i)_{1\leq i\leq n})$
where players are denoted by $1,\ldots,n$ and: $(V,E)$ is a finite
graph which forms the \emph{arena} of the game, with $V_\bot$ the
terminal states; $(V_i)_{1\leq i\leq n}$ is a partition of
$V\setminus V_\bot$ indicating which player owns each (non-terminal)
state of the game ($v$ belongs to player $i$ iff $v\in V_i$); and
$\preceq_i$ describes the preference of player $i$ as a reflexive,
transitive and total (i.e.~for all $\pi,\pi'$, $\pi\preceq_i\pi'$ or
$\pi'\preceq_i \pi$) binary relation defined on maximal paths which we
call \emph{plays} (the set of all plays being denoted by $\play$).
Intuitively, player $i$ prefers play $\pi$ to play $\pi'$ iff
$\pi'\preceq_i \pi$. We can extract from $\preceq_i$ a strict partial
order relation by letting $\pi \prec_i \pi'$ if player $i$ strictly
prefers play $\pi'$ to play $\pi$, i.e.~if $\pi\preceq_i\pi'$ and
$\pi'\not\preceq_i \pi$. We also write $\pi\eq_i\pi'$ if
$\pi\preceq_i \pi'$ and $\pi'\preceq_i \pi$, and say that $\pi$ and
$\pi'$ are equivalent for player~$i$. From now on, we describe
preferences by mentioning plays of interest only (implicitly, all
unmentioned plays are equivalent, and below in the preference
order). We also abuse notations and identify a game with its arena:
so, we can write, for instance, about the `paths of $\game$', meaning
the paths of the underlying arena.

\begin{figure}\centering
\begin{tikzpicture}[scale=0.85]
	\draw (0,2) node [state] (j1) {$v_1$};
	\draw (3,2) node [state] (j2) {$v_2$};
	\draw (1.5,0.5) node [state] (cible) {$v_\bot$};

	\draw[->,bend left=10]  (j1) to node[above]{$c_1$} (j2);
	\draw[->,bend left=10]  (j2) to node[below]{$c_2$} (j1);	
	\draw[->]  (j1) to node[left] {$s_1$} (cible);
	\draw[->]  (j2) to node[right] {$s_2$} (cible);	
\end{tikzpicture}\hspace*{10mm}
\begin{tikzpicture}[scale=0.85]
	\draw (0,2) node [state,rectangle] (cc) {$c_1c_2$};
	\draw (2,2) node [state,rectangle] (sc) {$s_1c_2$};
	\draw (0,0.5) node [state,rectangle] (cs) {$c_1s_2$};
	\draw (2,0.5) node [state,rectangle] (ss) {$s_1s_2$};

    \draw[->] (cc) to (sc);
    \draw[->] (cc) to (cs);
    \draw[->] (ss) to (sc);    
    \draw[->] (ss) to (cs);
     
    \end{tikzpicture}\hspace*{10mm}
    \begin{tikzpicture}[scale=0.85]
	\draw (0,2) node [state,rectangle] (cc) {$c_1c_2$};
	\draw (2,2) node [state,rectangle] (sc) {$s_1c_2$};
	\draw (0,0.5) node [state,rectangle] (cs) {$c_1s_2$};
	\draw (2,0.5) node [state,rectangle] (ss) {$s_1s_2$};

    \draw[->] (cc) to (sc);
    \draw[->] (cc) to (cs);
    \draw[->] (ss) to (sc);    
    \draw[->] (ss) to (cs);
    
    \draw[->, bend left=10] (ss) to (cc);    
    \draw[->, bend left=10] (cc) to (ss);    
    \end{tikzpicture}
\caption{Left: a 2-player game $\game^{DIS}$. Middle: $\graph{\game^{DIS}}\POdyn$. Right: $\graph{\game^{DIS}}\PCdyn$}
   \label{Fig:DISAGREE+graphDyn}
 \end{figure}

\begin{example}\label{Ex:DISAGREE}
  Consider the example of \cite{Griffin02}. In our context, it is
  modelled with the 2-player game
  $\game^{DIS}=(V,E,(V_1,V_2), (\preceq_1,\preceq_2))$ depicted on the
  left of \figurename~\ref{Fig:DISAGREE+graphDyn}. The state $v_\bot$
  is terminal. Player $1$ owns $V_1=\{v_1\}$, and player $2$ owns
  $V_2=\{v_2\}$. Let $E=\{c_1,s_1,c_2,s_2\}$ be such that
  $s_i=(v_i,v_\bot)$ and $c_1=(v_1,v_2),\ c_2=(v_2,v_1)$. Edges $c_i$
  stand for `continue', and edges $s_i$ stand for `stop'. For player
  $1$, we let the preferences be
  $(v_1v_2)^{\omega}\prec_1 v_1v_\bot\prec_1 v_1v_2v_\bot$, where
  $\pi^\omega$ denotes an infinite number of iterations of the
  cycle~$\pi$. Symmetrically, player 2 has preferences
  $(v_2v_1)^{\omega}\prec_2v_2v_\bot\prec_2 v_2v_1v_\bot$. In this
  case, all unmentioned plays are equally worse for both players, in
  particular the plays that do not start in the state owned by the
  player (this will always be the case in the routing application of
  Section~\ref{sec:applications}).
\end{example}

\subparagraph{Strategies and strategy profiles.} The game is played by
letting players move a token along the edges of the arena. Note that,
in our games, there is no designated initial state, so the play can
start in any state $v$. The choice of the initial state is not under
the control of any player. Then, the player who owns $v$ picks an
edge $(v,w)$ and moves the token to $w$. It is then the turn of
the player who owns $w$ to choose an edge $(w,u)$ and so
forth. The game continues \emph{ad infinitum} or until a terminal node
has been reached, thereby forming a play. Of course, each player will
act in order to yield a play that is best according to his preference
order $\prec_i$. Since no player controls the choice of the initial
vertex, the players will seek to obtain the best path considering
\emph{any possible initial vertex} (see the formal definitions below).
This will be important for the application of Interdomain Routing in
Section~\ref{sec:applications}, where the games are networks and each
state corresponds to a network node that wants to send a packet to one
of the terminal states.

Formally, a non-maximal path is called a \emph{history} in the
following, and the set of all histories is denoted by $\hist$. We let
$\hist_i$ be the set of histories $h$ such that $\last{h}\in V_i$,
i.e.~$h$ ends in a state that belongs to player $i$. We further let
$\player{h}=i$ iff $h\in \hist_i$. The way players behave in the game
is captured by the central notion of \emph{strategy}, which is a
mapping from a history $h$ to a successor state in the graph,
indicating how the player will play from $h$.  A~\emph{player~$i$
  strategy} is thus a function $\sigma_i\colon\hist_i\to V$ such that,
for all $h\in \hist_i$, $(\last{h},\sigma_i(h))\in
E$. A~\emph{strategy profile} $\sigma$ is a tuple
$(\sigma_i)_{1\leq i\leq n}$ of strategies, one for each player $i$.
In the following, when we consider a strategy profile $\sigma$, we
always assume that $\sigma_i$ is the corresponding strategy of
player~$i$. We also abuse notations, and write $\sigma(h)$ to denote
the node obtained by playing the relevant strategy of $\sigma$ from
$h$, i.e.~$\sigma(h)=\sigma_i(h)$ with $i=\player{h}$. We denote by
$\Sigma_i(\game)$ and $\Sigma(\game)$ the sets of player~$i$
strategies and of strategy profiles respectively (if the game~$\game$
is clear from the context, we may drop it and write $\Sigma$ and
$\Sigma_i$). As usual, given a strategy profile
$\sigma=(\sigma_i)_{1\leq i\leq n}$ and a strategy $\sigma'_j$ for
some player $j$, we denote by $(\sigma_{-j},\sigma_j')$ the strategy
profile obtained from $\sigma$ by replacing the player $j$ strategy
$\sigma_j$ with~$\sigma_j'$.  Fixing a history $h$ (or, in particular,
an initial node) and a profile of strategies $\sigma$ is sufficient to
determine a unique play that is called the \emph{outcome}: we let
$\Outc{\sigma}{h}$ be the (unique) play $hv_1v_2\cdots$ such that for
all $i\geq 1$: $v_i=\sigma(hv_1\cdots v_{i-1})$.

Of particular interest are the \emph{positional} strategies (sometimes
called \emph{memoryless}), i.e.~the set of strategies such that the
action of the player depends on the last state of the history
only. That is, $\sigma_i$ is positional iff for all pairs of histories
$h_1$ and $h_2$ in $\hist_i$: $\last{h_1}=\last{h_2}$ implies
$\sigma_i(h_1)=\sigma_i(h_2)$. For a \emph{positional strategy
  profile} $\sigma$, and a state~$v\in V$, we write $\sigma(v)$ to
denote the (unique) state $\sigma(h)$ returned by $\sigma$ for all $h$
with $\last{h}=v$. We denote by $\pstrat(\game)$ the set of strategy
profiles composed of positional strategies only, and by
$\pstrat_i(\game)$ the set of player $i$ positional strategies. From
all states $v$, applying a positional strategy profile builds a play
such that the very same decision is always taken at a particular
state: therefore, it either creates a finite path without cycles, or a
lasso (infinite path that starts with a finite path without cycle and
continues with an infinite simple cycle, disjoint from the finite
path). We let $\playP$ be the set of all \emph{positional plays} thus
generated. In a game where we are only interested in positional
strategies (as this will be the case in the application to routing,
for instance), the preferences need only be defined on positional
plays. Indeed, all other plays will never be obtained as an outcome,
and can be assumed to be worse than any other positional play.

\subparagraph{Game Dynamics.} Let us now turn our attention to the
central notion of \emph{dynamics}.  Intuitively, a dynamics consists
in letting players update their strategies according to some
criteria. For example, a player will want to update his strategy in
order to yield a better outcome according to his
preferences. Therefore, a dynamics can be understood as a graph whose
states are the strategy profiles and whose edges correspond to
possible updates.

\begin{definition}
  Let $\game$ be a game. A dynamics for $\game$ is a binary relation
  ${\to}\subseteq\Sigma\times\Sigma$ over the strategy profiles of
  $\game$. Its associated graph is $\graph\game\to=(\Sigma,{\to})$,
  where $\Sigma$ is the set of states.  The terminal profiles $\sigma$
  of $\graph\game\to$ (without outgoing edges) are
  called the \emph{equilibria} of $\to$.
\end{definition}

We will focus on five dynamics, modelling certain rational behaviours
of the players:
\begin{itemize}
\item The \emph{one-step} dynamics $\Odyn$. It corresponds to the
  minimal update that can occur, where only one player changes a
  single decision in order to improve the outcome from his point of
  view: $\sigma\Odyn\sigma'$ iff there is a player
  $i\in\{1,\ldots,n\}$ and a history $h\in \hist_i$ such that
  (i) $\sigma(h)\neq\sigma'(h)$; 
  (ii) $\Outc{\sigma}{h}\prec_i\Outc{\sigma'}{h}$; and
  (iii) $\sigma(h')=\sigma'(h')$ for all $h'\neq h$.
   Note that the equilibria of the one-step dynamics are
  exactly the so-called \emph{subgame perfect equilibria} (SPE)
  introduced in \cite{Se65} (see also \cite{OR94}).
\item The \emph{positional one-step} dynamics $\POdyn$. It ranges over
  positional strategy profiles only, and corresponds to a single
  player updating his strategy from a single state. Formally,
  $\sigma\POdyn\sigma'$ (with $\sigma,\sigma'\in \pstrat$) iff there
  are a player $i\in\{1,\ldots,n\}$ and a state $v\in V_i$ s.t.
  (i) $\sigma(v)\neq\sigma'(v)$;
  (ii) $\Outc{\sigma}{v}\prec_i\Outc{\sigma'}{v}$; and
  (iii) $\sigma(v')=\sigma'(v')$ for all $v'\neq v$.
  
\item The \emph{best reply positional one-step dynamics} $\BPOdyn$. We
  let $\sigma \BPOdyn \sigma'$ iff there exists a player
  $i\in\{1,\ldots,n\}$ and a state $v\in V_i$ such that the three
  properties of the positional one-step dynamics are satisfied, and,
  in addition, the following \emph{best-reply} condition is satisfied:
  (iv) for all $\sigma''\neq \sigma'$ such that
    $\sigma\POdyn\sigma''$ \emph{if} player~$i$ is the one that has
    changed its strategy between $\sigma$ and $\sigma''$, then:
    $\Outc{\sigma''}{v}\preceq_i\Outc{\sigma'}{v}$.

\item The \emph{positional concurrent} dynamics $\PCdyn$ and its best
  reply version $\BPCdyn$. Several players can update their strategies
  at the same time (in a `one step' fashion),
  but each individual update would yield a better play when performed
  independently (in some sense, each player performing an update
  `believes' he will improve). Formally, for
  $\sigma,\sigma'\in \pstrat$, we let $\sigma\PCdyn\sigma'$
  (respectively, $\sigma\BPCdyn\sigma'$) iff for all
  $i\in P(\sigma, \sigma')$, $\sigma\POdyn (\sigma'_i,\sigma_{-i})$
  (respectively, $\sigma\BPOdyn (\sigma'_i,\sigma_{-i})$).
 
\end{itemize}

\todo{M: On ne fait aucune référence à notre article Gandalf, alors
  que c'est quand même la continuité. On ne dirait pas un mot ici? Par
  exemple que d'autres types de dynamiques on été étudiées sur les
  jeux sur arbre ou quelque chose comme ça? Je sais pas si c'est une
  bonne idée, je propose...}  Observe that other dynamics can be
defined, corresponding to other behaviours of the players. We focus on
these five dynamics as they fit the applications we target in
Section~\ref{sec:applications}. We have already said that the
equilibria of $\Odyn$ are SPEs, and we can also see from the
definitions that the equilibria of the four other dynamics coincide.

\begin{example}
  Let $\game^{\mathrm{DIS}}$ be the game from Example~\ref{Ex:DISAGREE}. The
  graphs $\graph{\game^{\mathrm{DIS}}}\POdyn$ and $\graph{\game^{\mathrm{DIS}}}\PCdyn$
  are given in the middle and the right of
  \figurename~\ref{Fig:DISAGREE+graphDyn}, where each strategy profile
  is represented by the choices of the players from $v_1$ and
  $v_2$. For example, $c_1c_2$ is the strategy profile
  s.t. $\sigma_1(v_1)=v_2$ and $\sigma_2(v_2)=v_1$. Note that, in this
  example, ${\POdyn}={\BPOdyn}$ and ${\PCdyn}={\BPCdyn}$. Moreover, we can see that $\graph{\game^{\mathrm{DIS}}}\POdyn$ has no infinite paths, contrary to $\graph{\game^{\mathrm{DIS}}}\PCdyn$. We then say that the dynamics $\POdyn$ terminates on $\game^{\mathrm{DIS}}$, while $\PCdyn$ does not terminate on $\game^{\mathrm{DIS}}$.
\end{example}

The main problem we study is whether a given dynamics terminates on a
certain game: we say that a dynamics $\to$ terminates on the game
$\game$ if there is no infinite path in the graph $\graph\game\to$ of
the dynamics. As illustrated in the introduction
(Example~\ref{Ex:DISAGREE}), such infinite paths may be problematic in
certain applications, like in the Interdomain Routing problem, where
an infinite path in the dynamics means that the routing protocol does
not stabilise. We are thus interested in techniques to check whether a
dynamics terminates on a given game.

Sometimes, a dynamics does not terminate in general, but does when we
restrict ourselves to \emph{fair executions} where all players will
eventually have the opportunity to update their strategies if they
want~to. Formally, given a dynamics $\to$, an infinite path
$\sigma^1\to \sigma^2\to\cdots$ of the graph $\graph\game\to$ is
\emph{not fair} if there exists a player $i$, and a position $k$ such
that for all $\ell\geq k$, player $i$ can switch his strategy
in~$\sigma^\ell$ (i.e.~there is $\sigma^\ell\to\sigma'$ where
$\sigma^\ell_i\neq\sigma'_i$), but for all $\ell\ge k$, player $i$
keeps the same strategy forever (i.e.~$\sigma^\ell_i=\sigma^k_i$). We
say that the dynamics $\to$ \emph{fairly terminates} for the game
$\game$ if there are no infinite fair paths in the graph
$\graph\game\to$: this is a weakening of the notion of termination
seen before (see \cite{techrep} for an example of a dynamics that
does not terminate but terminates fairly).


\section{Simulations: preorders on the dynamics graphs}\label{Sec:Graphs}
At this point of the paper, it is important to understand that a game
is characterised by two graphs: the \emph{game graph} which gives its
\emph{structure} (see for example, \figurename~\ref{Ex:DISAGREE},
left); and the \emph{dynamics graph}, which, given a fixed dynamics
$\to$, defines the semantics of the game as the long-term behaviour of
the players (\figurename~\ref{Fig:DISAGREE+graphDyn}, middle and
right). In the present section, we study \emph{preorder relations} on
the \emph{dynamics graphs}, relying on the classical notion of
\emph{simulation} \cite{Milner}. They are the key ingredients to
reason about the termination of dynamics.

The domain of a binary relation $R\subseteq A\times B$ is the set of
elements $a\in A$ such that there exists $b\in B$ with $(a,b)\in
R$. The co-domain or $R$ is the set of elements $b\in B$ such that
there exists $a\in A$ with $(a,b)\in R$. We denote the domain of $R$
by $\dom(R)$. The transitive closure $R^+$ of relation $R$ is defined
as $(a,b)\in R^+$ iff there are $a_0=a, a_1,a_2,\ldots,a_n=b$ such
that for all $i\in \{0,1,\ldots,n-1\}$, $(a_i,a_{i+1})\in R$.

\subparagraph{Partial simulations and simulations.} We start with some
weak version of the notion of simulation, called \emph{partial
  simulation} $\sqsubseteq$. Intuitively, we say that a state $u$
\emph{partially simulates} a state $u'$ (noted $u'\sqsubseteq u$) if
for all successor states $v'$ of $u'$, the following holds: \emph{if
  $v'$ is in the domain of the simulation}, then there must be some
state $v$ simulating $v'$ such that $v$ is a successor of $u$.
Formally, if $G=(V,E)$ and $G'=(V',E')$ are two graphs, a binary
relation $\sqsubseteq$ contained in $V'\times V$ is a \emph{partial
  simulation} of $G'$ by $G$ if: for all
$(u',v')\in E'\cap\dom(\sqsubseteq)^2$, for all $u\in V$:
$u'\sqsubseteq u$ \emph{implies} there is $v\in V$ such that
$(u,v)\in E$ and $v'\sqsubseteq v$.  Then, a \emph{simulation}
$\sqsubseteq$ of $G'$ by $G$ is a partial simulation of $G'$ by $G$
s.t. $\dom(\sqsubseteq)=V'$, i.e. all states of $G'$ are simulated by
some state of $G$. When a (partial) simulation $\sqsubseteq$ of $G'$
by $G$ exists, we say that $G$ (partially) simulates $G'$.  The
following example highlights the difference between partial
simulations and simulations. Assume $G$ with only one edge $u\to v$
and $G'$ with only two edges $u'\to v_1'$ and $u'\to v_2'$. Then, the
relation $\sqsubseteq$ s.t. $u'\sqsubseteq u$ and $v_1'\sqsubseteq v$
(but $v_2'\not\sqsubseteq v$) is a partial simulation (its domain is
$\{u',v_1'\}$ so it is not a problem that $v_2'$ is not simulated) but
is not a simulation relation.

Simulations between dynamics graphs help in showing termination
properties, as shown by the following folk result:
\begin{proposition}\label{Prop:simulation=>same termination}
  Let $\game_1$ and $\game_2$ be two games, $\to_1$ and $\to_2$ be two
  dynamics on $\game_1$ and $\game_2$ respectively. If
  $\graph{\game_1}{\to_1}$ simulates $\graph{\game_2}{\to_2}$ and the
  dynamics $\to_1$ terminates on $\game_1$, then the dynamics $\to_2$
  terminates on $\game_2$.
\end{proposition}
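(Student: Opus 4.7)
The natural route is a contrapositive argument using the simulation relation to lift an infinite path in $\graph{\game_2}{\to_2}$ to an infinite path in $\graph{\game_1}{\to_1}$. Assume, for contradiction, that $\to_2$ does not terminate on $\game_2$: then there exists an infinite sequence $\sigma'_0 \to_2 \sigma'_1 \to_2 \sigma'_2 \to_2 \cdots$ in $\graph{\game_2}{\to_2}$. The plan is to construct, by induction on $k$, an infinite sequence $\sigma_0 \to_1 \sigma_1 \to_1 \sigma_2 \to_1 \cdots$ in $\graph{\game_1}{\to_1}$ such that $\sigma'_k \sqsubseteq \sigma_k$ for every $k$, which directly contradicts the termination of $\to_1$ on $\game_1$.

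For the base case, since $\sqsubseteq$ is a full simulation of $\graph{\game_2}{\to_2}$ by $\graph{\game_1}{\to_1}$, we have $\dom(\sqsubseteq) = \Sigma(\game_2)$, so there exists $\sigma_0 \in \Sigma(\game_1)$ with $\sigma'_0 \sqsubseteq \sigma_0$. For the inductive step, suppose $\sigma_0, \ldots, \sigma_k$ have been built with the required properties. Because $\sqsubseteq$ is a full simulation, both $\sigma'_k$ and $\sigma'_{k+1}$ lie in $\dom(\sqsubseteq)$, so the edge $\sigma'_k \to_2 \sigma'_{k+1}$ belongs to the set $E' \cap \dom(\sqsubseteq)^2$ from the definition of partial simulation. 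Applying the partial simulation clause to the pair $(\sigma'_k, \sigma_k)$ then yields some $\sigma_{k+1} \in \Sigma(\game_1)$ with $\sigma_k \to_1 \sigma_{k+1}$ and $\sigma'_{k+1} \sqsubseteq \sigma_{k+1}$, which closes the induction.

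There is no real obstacle here: the only subtle point is recognising that the fullness of the simulation is used twice, once to start the induction (to find $\sigma_0$) and implicitly at each step to ensure the target $\sigma'_{k+1}$ is in the domain so that the partial-simulation clause applies. Since the construction produces an infinite path in $\graph{\game_1}{\to_1}$, it contradicts the assumption that $\to_1$ terminates on $\game_1$, and the proposition follows.
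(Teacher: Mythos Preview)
Your argument is correct and is exactly the standard contrapositive proof one would expect. The paper itself does not give a proof of this proposition, labelling it a ``folk result'', so there is nothing to compare against beyond noting that your induction using the fullness of the simulation is the canonical way to establish it.
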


\subparagraph{Bisimulations and transitive closure.} We can define other
preorder relations on dynamics graphs. A bisimulation is a simulation
$\sqsubseteq$ such that the inverse relation $\sqsubseteq^{-1}$ is
also a simulation. We say that $G=(V,E)$ and $G'=(V',E')$ are
bisimilar when there is a bisimulation between them.  As a corollary
of the previous proposition, if $\graph{\game_1}{\to_1}$ and
$\graph{\game_2}{\to_2}$ are bisimilar, then $\to_1$ terminates on
$\game_1$ if and only if $\to_2$ terminates on $\game_2$.

For termination purposes, it is also perfectly fine to simulate a
single step of $G'$ in several steps of $G$ for instance. The
following proposition stems from Proposition~\ref{Prop:simulation=>same
  termination} and mixes the notions of transitive closures and
partial simulations.

\begin{proposition}\label{Cor:weak et co-weak=>same termination} Let
  $\game_1$ and $\game_2$ be two games, $\to_1$ and $\to_2$ be
  dynamics on $\game_1$ and $\game_2$ resp.
  \begin{itemize}
  \item If $\graph{\game_1}{\to_1^+}$ simulates
    $\graph{\game_2}{\to_2}$ and the dynamics $\to_1$ terminates on
    $\game_1$, then the dynamics $\to_2$ terminates on $\game_2$.
  \item If $\sqsubseteq$ is a partial simulation of
    $\graph{\game_2}{\to_2^+}$ by $\graph{\game_1}{\to_1^+}$, and the
    dynamics $\to_1$ terminates on~$\game_1$, then there are no paths
    in $\graph{\game_2}{\to_2}$ that visit a state of
    $\dom(\sqsubseteq)$ infinitely often.
  \end{itemize}
\end{proposition}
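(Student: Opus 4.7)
The plan is to reduce both statements to Proposition~\ref{Prop:simulation=>same termination} via the key observation that a dynamics $\to$ terminates on a game if and only if its transitive closure $\to^+$ terminates on that game. Once this equivalence is established, the first part is immediate, and the second part follows from a straightforward contradiction argument using the extraction of a ``compressed'' path in $\to_2^+$.

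First I would prove the auxiliary claim: $\to_1$ terminates on $\game_1$ iff $\to_1^+$ does. The ``if'' direction is trivial since every $\to_1$-edge is a $\to_1^+$-edge. For the ``only if'' direction, suppose $\sigma^0 \to_1^+ \sigma^1 \to_1^+ \sigma^2 \to_1^+ \cdots$ is an infinite path in $\graph{\game_1}{\to_1^+}$. By definition of the transitive closure, for each $i$ there is a finite non-empty $\to_1$-path from $\sigma^i$ to $\sigma^{i+1}$; concatenating all of these yields an infinite $\to_1$-path in $\graph{\game_1}{\to_1}$, contradicting termination of $\to_1$.

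For the first item, assume $\to_1$ terminates on $\game_1$. By the auxiliary claim, $\to_1^+$ also terminates on $\game_1$. Since $\graph{\game_1}{\to_1^+}$ simulates $\graph{\game_2}{\to_2}$ by hypothesis, Proposition~\ref{Prop:simulation=>same termination} directly yields termination of $\to_2$ on $\game_2$.

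For the second item, I would argue by contradiction. Suppose there is an infinite path $\rho = \sigma^1 \to_2 \sigma^2 \to_2 \cdots$ in $\graph{\game_2}{\to_2}$ that visits $\dom(\sqsubseteq)$ infinitely often. Extract the subsequence $\sigma^{k_1}, \sigma^{k_2}, \ldots$ of visits to $\dom(\sqsubseteq)$; each pair of consecutive states satisfies $\sigma^{k_j} \to_2^+ \sigma^{k_{j+1}}$, so we obtain an infinite path in $\graph{\game_2}{\to_2^+}$ lying entirely inside $\dom(\sqsubseteq)$. Pick any $\tau^1$ with $\sigma^{k_1} \sqsubseteq \tau^1$; since every edge of this extracted path has both endpoints in $\dom(\sqsubseteq)$, the partial simulation property applies at each step and allows us to inductively construct $\tau^2, \tau^3, \ldots$ with $(\tau^j, \tau^{j+1}) \in {\to_1^+}$ and $\sigma^{k_{j+1}} \sqsubseteq \tau^{j+1}$. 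This yields an infinite path in $\graph{\game_1}{\to_1^+}$, and by the auxiliary claim also in $\graph{\game_1}{\to_1}$, contradicting termination of $\to_1$ on $\game_1$.

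The main subtlety is ensuring the partial simulation is actually applicable along the extracted subsequence: the domain restriction in the definition of partial simulation requires \emph{both} endpoints of each used edge to belong to $\dom(\sqsubseteq)$, which is precisely why we extract the subsequence of visits to $\dom(\sqsubseteq)$ rather than working with the original path directly. The auxiliary equivalence between termination of $\to$ and of $\to^+$ is elementary but essential, since the hypotheses and conclusions mix both relations.
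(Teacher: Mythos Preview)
Your proof is correct and follows exactly the approach the paper indicates: it explicitly states that the proposition ``stems from Proposition~\ref{Prop:simulation=>same termination} and mixes the notions of transitive closures and partial simulations,'' without spelling out further details. Your auxiliary equivalence between termination of $\to$ and of $\to^+$, together with the extraction of the $\dom(\sqsubseteq)$-subsequence to ensure both endpoints of each $\to_2^+$-edge lie in the domain, is precisely the argument one needs to flesh this out.
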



\section{Minors and domination: preorders on game graphs}\label{Sec:Game}

Let us now introduce notions of \emph{preorders on game graphs}. We
introduce a new notion of graph minor which consists in lifting the
classical notion of graph minor to the context of $n$-player games on graphs. To
the best of our knowledge, this has not been done previously. This new
preorder on game graphs enables us to use in a simple context the results
of Section~\ref{Sec:Graphs} to reason about termination of
dynamics. Let us start with the formal definition. For that purpose,
we start by defining two transformations on game graphs. Let
$\game=(V, E, (V_i)_{i}, (\preceq_i)_{i})$ be an $n$-player game. Then
we can modify it by applying either of the following transformations
that yields a game $\game'=(V', E', (V'_i)_{i}, (\preceq'_i)_{i})$.
\begin{itemize}
\item \emph{Deletion of an edge} $(u,v)\in E$. Then, $V'=V$,
  $E'=E\setminus \{(u,v)\}$, $(V'_i)_{i} = (V_i)_{i} $ and
  $\preceq'_i$ is s.t. $\pi_1\preceq'_i\pi_2$ iff
  $\pi_1\preceq_i\pi_2$ and $\pi_1$, $\pi_2$ are both paths of
  $\game'$.
\item \emph{Deletion of a state} $v\in V_j$ (for a certain player
  $j$). This can happen in two different ways:
  \begin{enumerate}
  \item either when $v$ is isolated, i.e. when $(u,v)\not\in E$ and
    $(v,u)\not\in E$ for all $u\in V$. Then, $V'=V\setminus\{v\}$,
    $E'=E$, $V'_i = V_i$ for all $i\neq j$, $V'_j=V_j\setminus\{v\}$,
    and $(\preceq'_i)_{i}=(\preceq_i)_{i}$.
  \item or when $v$ has a unique outgoing edge $(v,v')$ and all
    predecessors $u$ of $v$ (i.e. $(u,v)\in E$) do not have $v'$ as a
    successor (i.e.~$(u,v')\notin E$). In this case, we have
    $V'=V\setminus\{v\}$, $V'_i=V_i$ for all $i\neq j$ and
    $V'_{j} = V_{j}\setminus \{v\}$,
    $E'=\left(E\cap (V'\times V')\right)\cup \{(u,v')\mid (u,v)\in E\}$, and
    $\pi_1'\preceq'_i \pi_2'$ iff $\pi_1\preceq_i\pi_2$ where $\pi_1$
    and $\pi_2$ are the plays of $\game$ obtained from $\pi_1'$ and
    $\pi_2'$ respectively, by replacing all occurrences of $(u,v')$
    (for some $u$) by $(u,v),(v,v')$.
  \end{enumerate}
\end{itemize}

\begin{definition}
  Let $\game$ and $\game'$ be two $n$-player games. Then, $\game'$ is
  a minor of $\game$ if $\game'$ can be obtained from $\game$ by
  applying a sequence of edges and states deletions.
\end{definition}

\begin{example}
  \begin{figure}[tbp]
  \centering
  \scalebox{.8}{\begin{tikzpicture}[node distance=1.2cm][scale=0.85]
    \node[state] (v1){$v_1$};
    \node[state,right of=v1] (v2){$v_2$};
    \node[state,below of=v1] (v3){$v_3$};
    \node[state,right of=v3] (v4){$v_4$};
    \node[state,below of=v3] (vbot){$v_\bot$};
    \node[state,right of=vbot] (v5){$v_5$};

    \path[->]
    (v1) edge (v2) edge (v4) edge (v3)
    (v2) edge (v4)
    (v3) edge (v4) edge (vbot)
    (v4) edge (v5) edge[color=red,very thick] (vbot)
    (v5) edge (vbot);
    
    \draw[->,decorate,decoration={snake,amplitude=.4mm,segment
      length=2mm,post length=1mm}] (2,-1.2) -- (3.01,-1.2);

    \begin{scope}[xshift=3.8cm]
      \node[state] (v1){$v_1$};
      \node[state,right of=v1] (v2){$v_2$};
      \node[state,below of=v1] (v3){$v_3$};
      \node[state,right of=v3,color=red,very thick] (v4){$v_4$};
      \node[state,below of=v3] (vbot){$v_\bot$};
      \node[state,right of=vbot] (v5){$v_5$};

      \path[->]
      (v1) edge (v2) edge (v4) edge (v3)
      (v2) edge (v4)
      (v3) edge (v4) edge (vbot)
      (v4) edge (v5)
      (v5) edge (vbot);
      
      \draw[->,decorate,decoration={snake,amplitude=.4mm,segment
        length=2mm,post length=1mm}] (2,-1.2) -- (3.01,-1.2);
    \end{scope}

    \begin{scope}[xshift=7.6cm]
      \node[state] (v1){$v_1$};
      \node[state,right of=v1] (v2){$v_2$};
      \node[state,below of=v1] (v3){$v_3$};
      \node[state,below of=v3] (vbot){$v_\bot$};
      \node[state,right of=vbot] (v5){$v_5$};

      \path[->]
      (v1) edge (v2) edge[color=red,very thick] (v5) edge (v3)
      (v2) edge (v5)
      (v3) edge (v5) edge (vbot)
      (v5) edge (vbot);
      
      \draw[->,decorate,decoration={snake,amplitude=.4mm,segment
        length=2mm,post length=1mm}] (2,-1.2) -- (3.01,-1.2);
    \end{scope}

    \begin{scope}[xshift=11.4cm]
      \node[state] (v1){$v_1$};
      \node[state,right of=v1] (v2){$v_2$};
      \node[state,below of=v1] (v3){$v_3$};
      \node[state,below of=v3] (vbot){$v_\bot$};
      \node[state,right of=vbot] (v5){$v_5$};

      \path[->]
      (v1) edge (v2) edge (v3)
      (v2) edge (v5)
      (v3) edge (v5) edge (vbot)
      (v5) edge (vbot);
    \end{scope}
  \end{tikzpicture}}
  \caption{Minors obtained by first deleting the edge $(v_4,v_\bot)$,
    then the state $v_4$ (that has now a unique successor $v_5$), and
    then the edge $(v_1,v_5)$}
  \label{fig:minor}
\end{figure}
  An example of minor is depicted in \figurename~\ref{fig:minor}. If
  the original preferences of the player owning state $v_1$ are
  $v_1v_4v_5v_\bot \prec v_1v_3v_\bot \prec v_1v_2v_4v_\bot\prec
  v_1v_2v_4v_5v_\bot$ (other plays being equally worse for this
  player), then after the deletion of the edge $(v_4,v_\bot)$, his
  preferences become
  $v_1v_4v_5v_\bot\prec v_1v_3v_\bot\prec v_1v_2v_4v_5v_\bot$ (the
  path $v_1v_2v_4v_\bot$ does not exist in the new graph and has
  simply been removed from the preferences). Next, the deletion of
  $v_4$ is allowed because it is a single outgoing edge $v_5$, and
  neither $v_1$ nor $v_2$ nor $v_3$ have an edge to $v_5$. After this
  deletion, the preferences become
  $v_1v_5v_\bot \prec v_1v_3v_\bot\prec v_1v_2v_5v_\bot$. Finally,
  after the deletion of the edge $(v_1,v_5)$, the preferences become
  $v_1v_3v_\bot\prec v_1v_2v_5v_\bot$.
\end{example}

The deletion of a state therefore consists in squeezing each path of
length $2$ around it in a single edge. In the example, the deletion of
the state $v_4$ consists in squeezing the paths $v_1v_4v_5$ in the
edge $v_1v_5$, and the same for $v_2v_4v_5$ and $v_3v_4v_5$ in the
edges $v_2v_5$ and $v_3v_5$ respectively. The condition
$(u,v')\notin E$ makes sure that this squeezing is not perturbed by
the presence of an incident edge $(u,v')$ that could have
contradictory preferences. For instance, in the previous example, we
cannot remove vertex $v_2$ in the minor obtained before having removed
edge $(v_1,v_5)$: otherwise, we would obtain as preferences for the
owner of $v_1$ the chain
$v_1v_5v_\bot\prec v_1v_3v_\bot\prec v_1v_5v_\bot$ which is not
possible.

We can link termination of dynamics on graph games to the presence of
minors, in the various dynamics introduced before: if we manage to
find a game minor where the dynamics does not terminate, then the
original game does not terminate either.

\begin{theorem}\label{Thm:minor of game=>simulation of graph}
  Let $\game$ be a game, and $\game'$ be a minor of $\game$. If
  ${\to}\in\{\Odyn,\POdyn,\PCdyn\}$, then $\graph\game\to$ simulates
  $\graph{\game'}\to$. In particular, via
  Proposition~\ref{Prop:simulation=>same termination}, if the dynamics
  $\to$ 
  terminates for $\game$, then it terminates for $\game'$ too.
\end{theorem}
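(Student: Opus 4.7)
The plan is to prove the theorem by induction on the length $k$ of a sequence of deletion operations transforming $\game$ into $\game'$. The base case $k=0$ is trivial, and since the composition of two simulations is a simulation, it suffices to handle a single deletion step from some game $\game$ to a game $\game_1$ obtained by exactly one of the three transformations (edge deletion, isolated-state deletion, or contraction of a state with a unique outgoing edge). For each transformation and each dynamics $\to \in \{\Odyn, \POdyn, \PCdyn\}$, I would exhibit an explicit simulation of $\graph{\game_1}{\to}$ by $\graph{\game}{\to}$, built from a canonical lift $\sigma' \mapsto \sigma$ of $\game_1$-strategy profiles to $\game$-strategy profiles, and then invoke Proposition~\ref{Prop:simulation=>same termination} for the ``in particular'' clause.

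For the deletion of an edge $(u,v)$, the vertex set is unchanged and $E_1 \subseteq E$, so every $\game_1$-strategy is directly a $\game$-strategy. I would take $\sigma' \sqsubseteq \sigma$ iff $\sigma$ agrees with $\sigma'$ on all $\game_1$-histories (for $\Odyn$), or simply $\sigma = \sigma'$ (for the positional dynamics). An update $\sigma' \to \tau'$ at a history $h_0$ is then matched in $\game$ by modifying $\sigma$ only at $h_0$; since $\sigma'$ and $\tau'$ never use the deleted edge, the outcomes from $h_0$ in the two games coincide, and $\preceq'_i$ is by definition the restriction of $\preceq_i$ to $\game_1$-plays, so the strict improvement transfers. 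The deletion of an isolated state induces a bisimulation by an obvious bijection on strategies and outcomes.

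The real content is the contraction of a state $v$ with unique outgoing edge $(v,v')$, under the side condition that no predecessor of $v$ already has $v'$ as a successor in $\game$. The canonical lift of $\sigma'$ sets $\sigma(v)=v'$ (the only choice) and, for every predecessor $u$ of $v$ with $\sigma'(u)=v'$ — which, by the side condition, must come from the substituted edge introduced by the transformation — resets $\sigma(u)=v$; all other coordinates of $\sigma'$ are left unchanged. The corresponding $\game$-outcome from any starting point $u_0 \neq v$ is obtained from the $\game_1$-outcome by inserting $v$ on every use of a substituted edge, which is precisely the rewriting used to define $\preceq'_i$. A one-step improvement $\sigma' \POdyn \tau'$ at $u_0$ therefore lifts to $\sigma \POdyn \tau$ at $u_0$, and the argument for $\PCdyn$ follows by applying this coordinate-wise; the case of $\Odyn$ is similar, extending the lift along histories and using the fact that every $\game$-history visiting $v$ is forced by the uniqueness of $(v,v')$ to match a substituted $\game_1$-history.

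The main obstacle is the bookkeeping in this third case: I would have to check that exactly one decision changes between $\sigma$ and $\tau$ (which relies on the side condition ruling out ambiguity between ``go through $v$'' and ``go directly to $v'$''), that the forced value $\sigma(v)=v'=\tau(v)$ does not count as a spurious second change, and that the substitution genuinely transforms $\prec'_i$ into $\prec_i$ — which holds exactly because $\preceq'_i$ was defined in the transformation to match $\preceq_i$ after substitution. Once the simulation is established in each of the three cases, composing along the induction and applying Proposition~\ref{Prop:simulation=>same termination} yields the termination transfer from $\game$ to $\game'$ and closes the argument.
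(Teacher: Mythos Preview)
Your proposal is correct and follows essentially the same approach as the paper's proof: reduce by transitivity of simulations to a single deletion step, then for edge deletion take $\sigma'\sqsubseteq\sigma$ iff $\sigma$ agrees with $\sigma'$ on all $\game'$-histories (respectively, on all states in the positional cases), and for the contraction of a state $v$ with unique outgoing edge $(v,v')$ use the bijective correspondence between $\game'$-profiles and $\game$-profiles obtained by forcing $\sigma(v)=v'$ and replacing each substituted edge $(u,v')$ by the detour through $v$. The paper makes the same observations you flag as the ``main obstacle'' (the side condition guarantees the lift is well-defined and that only one decision changes), and in fact notes that in the state-deletion case the relation is a bisimulation.
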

\begin{proof}[Sketch of proof]
  We prove the result for $\Odyn$ ; the two other cases are
  similar. Since simulations are transitive relations, it is
  sufficient to only consider that $\game'$ has been obtained from
  $\game$ either by deleting a single edge, or by deleting a single
  node. Let us briefly detail the case where $\game'$ is obtained by
  the deletion of a state $v$. If
  $h\in\hist(\game)\setminus \{h\mid last(h)=v\}$, we
  can construct a corresponding play $f(h)$ of $\game'$ by replacing a
  sequence $uvv'$ of $h$ by $uv'$. The conditions over the deletion of
  $v$ implies that that $f$ is indeed a bijection from
  $\hist(\game)\setminus \{h\mid last(h)=v\}$ to
  $\hist(\game')$.  We then consider the following relation on
  strategy profiles: $\sigma'\sqsubseteq \sigma$ if for all histories
  $h\in \hist(\game)\setminus \{h\mid last(h)=v\}$,
  $\sigma'(f(h))=\sigma(h)$ if $\sigma(h)\neq v$, and
  $\sigma'(f(h))=v'$ otherwise; and show that $\sqsubseteq$ is a
  simulation (indeed a bisimulation).
\end{proof}

\begin{figure}[tbp]\centering
\begin{tikzpicture}[scale=0.8]
	\draw (0,2) node [state] (j1) {$v_1$};
	\draw (3,2) node [state] (j2) {$v_2$};
	\draw (1.5,0.5) node [state] (cible) {$v_\bot$};
        \draw (0,0.5) node[state, very thick] (j3) {$v_3$} ;

	\draw[->,bend left=10]  (j1) to node[above]{$c_1$} (j2);
	\draw[->,bend left=10]  (j2) to node[below]{$c_2$} (j1);	
	\draw[->]  (j1) to node[left] {$s_1$} (cible);
	\draw[->]  (j2) to node[right] {$s_2$} (cible);
        \draw[->,very thick]  (j1) to node[left] {$d$} (j3);
        \draw[->,very thick]  (j3) to (cible);

      \end{tikzpicture}\hspace*{5mm}
\scalebox{.9}{\begin{tikzpicture}[scale=0.8]
	\draw (0,2) node [state,rectangle] (cc) {$c_1c_2$};
	\draw (2,2) node [state,rectangle] (sc) {$s_1c_2$};
        \draw (4,2) node [state,rectangle] (dc) {$dc_2$};
	\draw (0,0.5) node [state,rectangle] (cs) {$c_1s_2$};
	\draw (2,0.5) node [state,rectangle] (ss) {$s_1s_2$};
        \draw (4,0.5) node [state,rectangle] (ds) {$ds_2$};

        \path[->]
        (cc) edge[bend left] (dc)
             edge (ds)
             edge (cs)
             edge (sc)
             edge[bend left=10] (ss)
        (sc) edge (dc)
        (ss) edge (sc)
             edge (dc)
             edge (ds)
             edge (cs)
             edge[bend left=10] (cc)
        (dc) edge (ds)
        ; 
      \end{tikzpicture}\hspace*{5mm}
\begin{tikzpicture}[scale=0.85]
	\draw (0,2) node [state,rectangle] (cc) {$c_1c_2$};
	\draw (2,2) node [state,rectangle] (sc) {$s_1c_2$};
        \draw (4,2) node [state,rectangle] (dc) {$dc_2$};
	\draw (0,0.5) node [state,rectangle] (cs) {$c_1s_2$};
	\draw (2,0.5) node [state,rectangle] (ss) {$s_1s_2$};
        \draw (4,0.5) node [state,rectangle] (ds) {$ds_2$};

        \path[->]
        (cc) edge[bend left] (dc)
             edge (ds)
             edge (cs)
        (sc) edge (dc)
        (ss) edge (sc)
             edge (dc)
             edge (ds)
        (dc) edge (ds)
        ;     
      \end{tikzpicture}}
      
    \caption{L: a 3-player game $\game$ with $\game^{\mathrm{DIS}}$
      (\figurename~\ref{Fig:DISAGREE+graphDyn}) as a minor. M:
      $\graph\game\PCdyn$. R: $\graph{\game}\BPCdyn$}
   \label{Fig:ce-theorem-9}
 \end{figure}

Notice that Theorem~\ref{Thm:minor of game=>simulation of graph}
suffers from three weaknesses. First, it does not hold for the best
reply dynamics $\BPOdyn$ and $\BPCdyn$, as shown by the following
example. Consider again the game $\game^{\mathrm{DIS}}$ from
Example~\ref{Ex:DISAGREE}. Further, consider the game $\game$ 
in \figurename~\ref{Fig:ce-theorem-9} obtained from $\game^{\mathrm{DIS}}$ by adding
a third player, who owns a single node $v_3$, such that the only edges to
and from $v_3$ are $(v_1,v_3)$ and $(v_3,v_\bot)$, and where the
preferences of player~1 are now
$v_1v_\bot\prec_1 v_1v_2v_\bot\prec_1 v_1v_3v_\bot$ (observe that now,
he prefers a path that traverses the new node $v_3$ above all other
paths). Clearly, $\game^{\mathrm{DIS}}$ is a minor of $\game$. Using
Theorem~\ref{Thm:minor of game=>simulation of graph}, and since we
know that $\graph{\game^{\mathrm{DIS}}}{\PCdyn}$ does not terminate, we deduce
that $\graph{\game}{\PCdyn}$ does not terminate either. Moreover, in
this example,
$\graph{\game^{\mathrm{DIS}}}{\PCdyn}=\graph{\game^{\mathrm{DIS}}}{\BPCdyn}$, so even
with the best-response property, the dynamics does not terminate in
the minor. However, one can check that $\graph{\game}{\BPCdyn}$
terminates thanks to the best-response property: Player~1 will not try
to obtain path $v_1v_2v_\bot$ (which leads to a cycle in
$\graph{\game^{\mathrm{DIS}}}{\PCdyn}$), but will choose a strategy going to
$v_3$ (see \figurename~\ref{Fig:ce-theorem-9}, Right). So, $\game^{\mathrm{DIS}}$ is a
minor of $\game$, s.t. $\BPCdyn$ terminates for $\game^{\mathrm{DIS}}$ but not
in $\game$. The example can be adapted to $\BPOdyn$.

A second weakness is that Theorem~\ref{Thm:minor of game=>simulation
  of graph} does not apply to fair termination: the dynamics $\to$
could fairly terminate for the game $\game$, but not for his minor $\game'$. This could
be the case if we remove every choice (except one) for a certain
player in the minor $\game'$ creating a fair cycle in $\game'$ that
would not be present in $\game$. 

Finally, the reciprocal of Theorem~\ref{Thm:minor of game=>simulation
  of graph} does not hold: all dynamics terminate on the trivial graph
with a single state, but it is also minor of all games, including
those where the dynamics does not terminate.

This motivates the introduction of a stronger notion of graph minor,
where it is allowed to remove \emph{only} the so-called
\emph{dominated} edges. Formally, let $\game$ be a game, let
$v\in V_i$ be a state, and let $e_1=(v,v_1)$ and $e_2=(v,v_2)$ be two
outgoing edges of $v$. We say that\emph{ $e_1$ is dominated by $e_2$}
if for all \emph{positional} strategies\footnote{We restrict our
  definition to the context of positional strategies, for the sake of
  brevity, but it can be extended to the more general setting.}
$\sigma\in \pstrat$, $\Outc{\sigma_1}{v}\prec_i\Outc{\sigma_2}{v}$,
where $\sigma_1$ and $\sigma_2$ coincide with $\sigma$ except that
$\sigma_1(v)=v_1$ and $\sigma_2(v)=v_2$. Intuitively, this means that
the player always prefers $e_2$ to $e_1$.  Then, a game $\game'$ is
said to be a \emph{dominant minor} of $\game$ if it can be obtained
from $\game$ by deleting states as before, but only deleting
\emph{dominated} edges. Equipped with this notion, we overcome the
three limitations of Theorem~\ref{Thm:minor of game=>simulation of
  graph} we had identified:

\begin{theorem}\label{theo:fair-termination}\label{Thm:dominant minor}
  Let $\game$ be a game and $\game'$ be a dominant minor of
  $\game$. If ${\to} \in \{\BPOdyn,\BPCdyn\}$, then we can build a
  simulation $\sqsubseteq$ of $\graph{\game'}{\to}$ by
  $\graph{\game}{\to}$ such that:
(i) $\sqsubseteq^{-1}$ is a partial simulation of
    $\graph{\game}{\to}$ by $\graph{\game'}{\to}$; and
  (ii) if there is a fair
  cycle in $\game$ then there is a fair cycle in $\game'$.

  In particular, the dynamics $\to$ fairly terminates for $\game$ if
  and only if it does for $\game'$.
\end{theorem}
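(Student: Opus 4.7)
The plan is to mimic the proof of Theorem~\ref{Thm:minor of game=>simulation of graph}, reducing, via transitivity of simulations and of property~(ii), to the case where $\game'$ is obtained from $\game$ by a single transformation, that is, either a state deletion or a single dominated edge deletion. The state deletion case is essentially identical to the corresponding case of Theorem~\ref{Thm:minor of game=>simulation of graph}: positional strategy profiles of $\game$ and $\game'$ are in bijection (the deleted state has a unique outgoing edge), the correspondence preserves outcomes and thus best-reply computations, and the resulting $\sqsubseteq$ is in fact a bisimulation; this directly yields (i) and (ii), and the dominance condition plays no role here.

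The core of the proof is the deletion of a single dominated edge $e=(v,v_1)$, dominated by $(v,v_2)$. I define $\sqsubseteq$ by $\sigma'\sqsubseteq\sigma$ iff $\sigma=\sigma'$, viewing $\Sigma(\game')$ as the subset of $\Sigma(\game)$ consisting of profiles $\sigma$ with $\sigma(v)\neq v_1$; hence $\dom(\sqsubseteq)=\Sigma(\game')$ and $\codom(\sqsubseteq)=\{\sigma\in\Sigma(\game):\sigma(v)\neq v_1\}$. The cornerstone observation, to be used throughout, is that by the very definition of dominance, $v_1$ is \emph{never} the best reply at $v$ in any profile, because moving to $v_1$ is strictly worse than moving to $v_2$. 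Two consequences follow: (a)~no BPO step of $\game$ ever sets the $v$-component of the profile to $v_1$, and (b)~best-reply computations at any state yield the same answer in $\game$ and in $\game'$, since the dropped candidate $v_1$ is never optimal.

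Given (a) and (b), condition~(i) follows by routine verification: every BPO/BPC step $\sigma'\to\tau'$ of $\game'$ is, by~(b), identically a step of $\game$ (so $\sqsubseteq$ is a simulation), and every BPO/BPC step of $\game$ starting from $\sigma\in\codom(\sqsubseteq)$ stays in $\codom(\sqsubseteq)$ by~(a) and matches a step of $\game'$ by~(b) (so $\sqsubseteq^{-1}$ is a partial simulation). For~(ii), the crucial observation is that~(a) forces the sequence $\sigma^\ell(v)$ to be monotone along any infinite path of $\graph{\game}{\to}$: once it leaves $v_1$, it cannot return. Consequently, along any cycle $C$, either $\sigma^\ell(v)=v_1$ for every $\ell$, or $\sigma^\ell(v)\neq v_1$ for every $\ell$. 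In the latter case, $C\subseteq\codom(\sqsubseteq)$ and transports via $\sqsubseteq^{-1}$ to a cycle of $\graph{\game'}{\to}$ that remains fair, because the sets of enabled moves at corresponding profiles coincide in both games by~(b).

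The principal obstacle is the first subcase of~(ii), where $\sigma^\ell(v)=v_1$ throughout a fair cycle $C$. Here the owner of $v$ has a strict best-reply improvement at $v$ that is available at every profile of $C$ but is never taken by $C$. I would exploit this together with fairness and dominance to construct a fair infinite path of $\game$ entirely within $\codom(\sqsubseteq)$---say, by first applying the available improvement at $v$ to some profile of $C$ and then analysing the ensuing dynamics, which by~(a) can never re-enter the forbidden region $\{\sigma(v)=v_1\}$---and to extract from it, by finiteness of the state space, a cycle which transports to a fair cycle of $\game'$ via $\sqsubseteq^{-1}$. The delicate point is that modifying the $v$-component may alter best replies at other states; showing that the constructed path remains a valid BPO/BPC execution while retaining fairness requires a careful monotonicity argument based on dominance. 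The final \emph{iff} conclusion on fair termination then follows by combining~(i), which lifts any fair infinite path of $\game'$ to one of $\game$ with identical profiles and enabled moves, with~(ii).
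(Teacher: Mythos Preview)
Your overall approach mirrors the paper's proof almost exactly: reduce by transitivity to a single transformation, treat state deletion as a bisimulation (dominance plays no role there), and for the deletion of a dominated edge $e_0=(v,v_1)$ use the identity relation $\sqsubseteq$ with domain $\pstrat(\game')$ and codomain the profiles of $\game$ not choosing $e_0$. Your observations~(a) and~(b) are precisely what the paper invokes to argue that $\sqsubseteq$ is a simulation and $\sqsubseteq^{-1}$ a partial simulation under best reply.

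Where you diverge is in the handling of~(ii). The paper's argument here is extremely brief: it simply says that $\sqsubseteq^{-1}$ ``allows one to reconstruct from a fair cycle in $\graph{\game}{\to}$ a cycle in $\graph{\game'}{\to}$,'' and that non-fairness of the reconstructed cycle would force $e_0$ to be ``chosen infinitely often'' in $\game$, contradicting best reply. The paper does \emph{not} explicitly address the subcase you call the principal obstacle, where every profile on the fair cycle has $\sigma(v)=v_1$ and hence lies entirely outside $\codom(\sqsubseteq)$, so that the partial simulation does not directly apply. Your proposal to push the cycle into $\codom(\sqsubseteq)$ by first performing the pending best-reply move at $v$, and then argue that the resulting path is still a fair infinite execution, goes beyond what the paper attempts.

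That said, you do not actually carry out the ``careful monotonicity argument'' you flag as delicate, and it genuinely is: changing $\sigma(v)$ can alter best replies elsewhere, so the modified sequence need not remain a valid $\BPOdyn$/$\BPCdyn$ path step for step, and fairness of the new path is not automatic. So this remains a gap in your proposal---one that the paper's own proof also leaves open, just less visibly. Note, incidentally, that the obstacle only arises when the owner of $v$ controls at least one other state and is kept ``fair'' by moving there; in the one-state-per-player setting of Section~\ref{sec:applications} the problematic subcase cannot occur.
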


Now, Theorem~\ref{theo:fair-termination} has some limitations too. We
can show that it does not hold for the `non-best-reply dynamics'
$\POdyn$ and $\PCdyn$. Moreover, even when we consider best-reply
dynamics, the fairness condition remains crucial: we can exhibit (see
\cite{techrep}) a case where there is a (non-fair) cycle in $\game$
but no cycles in $\game'$.


\section{Applications to interdomain routing convergence}
\label{sec:applications}

As explained in the introduction, the Border Gateway Protocol (BGP) is
the \textit{de facto} standard interdomain routing protocol. Its role
is to establish routes to all the networks that compose the
Internet. BGP does this by growing a routing
tree towards every destination network in a distributed manner, as follows. \todo{M: Si je ne me trompe pas, ce n'est pas exactement un arbre qu'on décrit après, si?} In the initial
state, only the router in the destination network has a route towards
itself that it advertises to its neighbours. Each time a router
receives an advertisement, it selects among the neighbour routes the
one it considers best and then advertises it to its neighbours. The
process repeats until no router wants to change its best route. To
select its best route, a router first filters the received routes to
retain only \emph{permitted one}s and ranks them according to its
\emph{preference}. Both the filtering and ranking of routes by a
router are decided based on the network's routing policy. For
example, a route can be preferred over another because it offers
better performance or costs less and it can be filtered out because it
is not economically viable.

As shown in the introductory example, the routing approach at the
heart of BGP has known convergence issues. It could fail to reach an
equilibrium, entering a persistent oscillatory behaviour or it could
have no equilibrium at all. This is a well-studied problem that has
led to considerable work
\cite{Griffin02,GaoRexford,FabrikantPapadimitriou,SamiSchapiraZohar,CittadiniDR,Griffin18,JaggardLutzSchapiraWright,NRTV07}.
In their seminal work \cite{Griffin02}, Griffin \textit{et al.}
analysed the BGP convergence properties using a simplified model named
the Stable Path Problem (SPP). The main questions they ask are the
following:
(1) whether an SPP instance is \textbf{Solvable}, i.e., whether
  it admits a stable state;
(2) whether the stable state is \textbf{Unique}; and
(3) whether the system is \textbf{Safe}, i.e. it always converges
  to a stable state.

They also give a \emph{sufficient} condition for an SPP instance to be
safe: the absence of a substructure named a \textbf{Dispute
  Wheel}. Later, Sami \textit{et al.} \cite{SamiSchapiraZohar} have
shown that the existence of multiple stable states is a sufficient
condition to prevent safety (i.e. Safe $\implies$ Unique). These
results have later been refined by Cittadini \textit{et al.}
\cite{CittadiniDR}.  While the works just cited focus on the
definition of sufficient conditions for safety, another approach by
Gao and Rexford \cite{GaoRexford} achieves convergence by enforcing
only local conditions on route preferences.

In this section, we show how SPP can be expressed in our $n$-player
game model, therefore \textbf{Safety} reduces to checking for termination
of the game dynamics. We revisit the result of Sami \textit{et al.} by
providing a new proof that relies on our framework. Then, we further
exploit this framework to obtain a new result about SPP: we provide a
\emph{necessary and sufficient} condition for safety in a setting
which is more restricted (yet still realistic) than in \cite{Griffin02}.

\subparagraph{One target games.} We first translate the SPP, as a
combination of:
(1) a reachability game that models the network topology and routing
  policies; and
(2) the best-reply positional concurrent dynamics
  that models the asynchronous behaviour of the routing protocol.

Using this approach, the routing safety problem translates to a
dynamics termination problem.

We rely on a particular class of games, that we call \emph{one target
  games} (\OTG for short): they have a unique target, the destination
network, that all players want to reach. Each player corresponds to a
network in the Internet and as such owns a single state. The routing
policies of networks are modelled by the preference relations and by
the distinction between permitted and forbidden paths. The preferences
are only over positional strategies (paths), meaning that each network
picks its next-hop independently of its predecessors. Permitted and
forbidden paths model the fact that only some paths are allowed by the
networks routing policies. Forbidden paths are also used to take into
account additional restrictions that cannot be directly modelled. In
SPP, the paths are simple (no loops); and non-simple paths are
forbidden, for obvious reasons of efficiency. Moreover, in SPP, if at
some point a network reaches a forbidden path, he will inform his
neighbours that he is not able to reach the target. To model this, we
impose a requirement that that if a path is permitted, all its
suffixes are also permitted.

Formally, let
$\game=(V, E, (V_i)_{1\leq i\leq n}, (\preceq_i)_{1\leq i\leq n})$ be an
$n$-player game. For all $1\leq i\leq n$, we assume that $\perm_i$ is
the set of \emph{permitted paths} of player $i$. All these paths are
finite paths of the form $v_i\cdots \target \in \perm_i$. We denote by
$\nperm_i$ the set of \emph{forbidden paths}, i.e. all the positional
plays starting in $v_i$ that are not in $\perm_i$ (in particular, all
infinite paths are forbidden). We let
$\perm=\bigcup_{1\le i\le n} \perm_i$ and
$\nperm=\bigcup_{1\le i\le n} \nperm_i$.  Then, $\game$ is a \emph{one
  target game} (\OTG) if:
  \begin{itemize}
  \item $\Target=\{\target\}$, and, for all players $i$:
    $V_i=\{v_i\}$;
  \item for all $\pi_1\in \nperm_i$, for all $\pi_2\in \perm_i$:
    $\pi_1\prec_i \pi_2$ (permitted is better than forbidden);
  \item for all $\pi_1,\pi_2\in \nperm_i$: $\pi_1\eq_i \pi_2$ (all
    forbidden paths are equivalent);
  \item for all $\pi_1,\pi_2\in \perm_i$: $\pi_1\eq_i \pi_2$
    \emph{implies} that then there are $v\in V$ and
    $\widetilde{\pi}_1, \widetilde{\pi}_2$ s.t.
    $\pi_1=v_iv\widetilde\pi_1$ and $\pi_2=v_iv\widetilde\pi_2$ (if
    two permitted paths are equivalent, they have the same next-hop);
  \item for all $\pi\in \perm_i$, for all suffixes $\widetilde{\pi}$
    of $\pi$: $\widetilde{\pi}\in \perm$ (all suffixes of permitted
    paths are permitted).
\end{itemize}

Our running example (\figurename~\ref{Fig:DISAGREE+graphDyn}) is a
\OTG. Since, in such a game, each player owns one and only one state, we
will abuse notation by confusing each state $v\in V$ with its
player. For example, for $v\in V_i$, we will write $\prec_v$ instead
of $\prec_i$.

\subparagraph{Sami \textit{et al}: Termination implies a unique
  terminal node.}  Equipped with this definition, we start by revisiting
a result of Sami \textit{et al.} saying that when an instance of SPP
is \emph{safe}, the solution is unique. In our setting, this
translates as follows:
\begin{theorem}\label{thm:one-equilibrium}
  Let $\game$ be a \OTG.  If ${\BPCdyn}$ fairly terminates for $\game$
  (i.e.~the corresponding instance of SPP is safe), then it has
  exactly one equilibrium.
\end{theorem}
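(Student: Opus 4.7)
The plan is a proof by contradiction. Suppose $\BPCdyn$ fairly terminates on $\game$, yet $\game$ admits two distinct equilibria $\sigma \neq \sigma'$. I would then aim to exhibit a fair infinite path in $\graph{\game}{\BPCdyn}$, contradicting the assumption. Set $D = \{v \in V \setminus \{v_\bot\} : \sigma(v) \neq \sigma'(v)\}$, which is nonempty. The \OTG axiom asserting that equivalent permitted paths share a common next-hop is central: for each $v \in D$, the outcomes $\Outc{\sigma}{v}$ and $\Outc{\sigma'}{v}$ begin with different next-hops, hence they cannot both be permitted and equivalent. Since $\sigma$ is an equilibrium and $v$ can switch to $\sigma'(v)$, the outcome $v \cdot \Outc{\sigma}{\sigma'(v)}$ obtained under $(\sigma_{-v}, \sigma'(v))$ must satisfy $\preceq_v \Outc{\sigma}{v}$; symmetrically for $\sigma'$. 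This asymmetry — each equilibrium is a best-reply within its own context but not the other's — is what powers the oscillation.

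Next, I would construct an oscillating sequence of \emph{mixed} profiles over $D$. Concretely, for subsets $S \subseteq D$, let $\tau^S$ denote the profile where $v \in S$ plays $\sigma'(v)$ and $v \notin S$ plays $\sigma(v)$, so that $\tau^\emptyset = \sigma$ and $\tau^D = \sigma'$. I would identify two (or finitely many) subsets $S_1, \ldots, S_k \subseteq D$ with $S_{k+1} = S_1$ forming a $\BPCdyn$-cycle $\tau^{S_1} \BPCdyn \cdots \BPCdyn \tau^{S_k} \BPCdyn \tau^{S_1}$, by showing that at each $\tau^{S_i}$ some players in $D$ are unsatisfied (their current outcome is strictly worse than the outcome under the planned switch), and that such simultaneous best-reply switches produce $\tau^{S_{i+1}}$. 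The intuition — made concrete by the DISAGREE example, whose cycle is exactly $\tau^{\{v_1\}} \leftrightarrow \tau^{\{v_2\}}$ — is that a partial swap between two equilibria typically breaks reachability to $v_\bot$ for the disagreeing players: their outcomes become forbidden (cycles), making every permitted alternative strictly preferable. Fairness can then be ensured by designing the cycle so that each player in $D$ participates in at least one transition, i.e. switches infinitely often along the resulting periodic path.

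The main obstacle is proving in full generality that such a cycle always exists: verifying that for every pair of distinct equilibria, one can find subsets $S_i$ whose transitions are valid $\BPCdyn$-moves satisfying the best-reply condition. This requires a finer case analysis on whether intermediate outcomes are permitted or forbidden, and possibly a minimality argument on $D$ to rule out degenerate cases. An alternative route that stays closer to the paper's framework is to construct from $\sigma$ and $\sigma'$ a dominant minor of $\game$ whose $\BPCdyn$-dynamics contains a fair cycle (morally, the DISAGREE pattern embedded on two representative disagreeing nodes), and then invoke Theorem~\ref{theo:fair-termination} to transfer non-fair-termination back to $\game$; the difficulty there is justifying that the required edge deletions involve only dominated edges, which is delicate precisely because both $\sigma(v)$ and $\sigma'(v)$ are best-replies in their respective contexts and so neither dominates the other outright.
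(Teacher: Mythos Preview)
Your plan correctly identifies the contrapositive structure and the role of the next-hop axiom, but the gap you yourself flag is real and not a matter of bookkeeping. For $\tau^{S_i}\BPCdyn\tau^{S_{i+1}}$ you need every switching player to make a \emph{best} reply, not merely a profitable one; there is no reason why, at an intermediate profile $\tau^{S_i}$, the best response of $v$ should be either $\sigma(v)$ or $\sigma'(v)$ rather than some third neighbour. DISAGREE works only because each node has exactly two outgoing edges. Your alternative via dominant minors hits precisely the obstruction you name: at a disagreeing node the two equilibrium edges are mutually non-dominated, so Theorem~\ref{theo:fair-termination} cannot be invoked to trim down to a DISAGREE pattern.

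The paper avoids ever verifying best-reply conditions by hand. It introduces an auxiliary \emph{belief} dynamics $\leadsto$ on tuples $(\sigma_{i,j})_{i,j}$, where $\sigma_{i,j}$ records what player $j$ believes player $i$ is playing, with labels $\{0,1,\ldots,n\}$: label $\ell>0$ lets player $\ell$ best-respond to its \emph{own} beliefs (or stay put), label $0$ synchronises all beliefs with reality. By the \OTG axioms this labelled graph is finite, complete and deterministic, and one checks a local confluence property of the shape ``from any state, $aA$ and $baB$ can reach a common state''. Lemma~\ref{Lemma:ab=ba} then guarantees, purely combinatorially, that any such graph with a state reaching two distinct sinks contains a non-constant $L$-fair cycle; the two equilibria furnish the two sinks, reached from the state where everyone plays $\sigma$ but believes the others play $\sigma'$. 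A partial simulation of $\graph\game{\leadsto^+}$ by $\graph\game{\BPCdyn^+}$, with domain the fully-informed states, then transfers this cycle back via Proposition~\ref{Cor:weak et co-weak=>same termination}, and $L$-fairness of the belief cycle yields fairness in $\BPCdyn$. The cycle is thus produced abstractly by the graph lemma rather than exhibited explicitly, which is how the best-reply obstacle is sidestepped.
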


We (re-)prove this result in our setting. We rely on the notion of
\Lfair\ path that we define now. For a labelled graph $G=(V,E,L)$, we
write $v_1\edge{a}{\to} v_2$ iff $L(v_1,v_2)=a$ (for $v_1,v_2\in
V$). We further write $v_1\edge{A}{\to} v_2$ with $A=a_1\cdots a_n$
iff $v_1\edge{a_1}{\to}\cdots\edge{a_n}{\to}v_2$. Then, a path
$\pi = v_1v_2\cdots$ is \Lfair\ if all labels occur infinitely often in this
path, i.e.~for all $a\in L$, for all $k\geq 1,\ \exists k'\ge k$ such that
$L(v_{k'},v_{k'+1})=a$. 
A cycle $\pi$ is called \emph{constant}
if there exists a state $v$ such that
$\pi=v^\omega$. Moreover, a node is a \emph{sink} if its only
outgoing edges are self loops. Then, we can show the following
technical lemma:
\begin{lemma}\label{Lemma:ab=ba}
  Let $G=(V,E,L)$ be a finite complete deterministic labelled graph
  satisfying: for all $v\in V$, for all $a,b\in L$, there are
  $A, B \in L^*$ and $ \tilde{v} \in V$ such that
  $v\edge{aA}{\to} \tilde{v}$ and $v\edge{baB}{\to} \tilde{v}$.  If
  there exists a state from which we can reach two different sinks, then $G$ has a non constant \Lfair\
  cycle.
\end{lemma}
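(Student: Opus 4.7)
My plan is to reformulate the goal and then handle two cases according to whether $G$ has a nontrivial bottom SCC.

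A non-constant $L$-fair cycle in $G$ corresponds to a strongly connected set $H\subseteq V$ with $|H|\geq 2$ such that for every label $a\in L$ some $u\in H$ has $\delta(u,a)\in H$ — call such an $H$ \emph{label-rich}. The plan is to exhibit such an $H$. If $G$ admits a bottom SCC $C$ with $|C|\geq 2$, then $C$ is closed under $\delta(\cdot,a)$ for every label $a$, hence label-rich, and we are done. So from now on assume every bottom SCC of $G$ is a singleton, i.e.\ a sink.

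By hypothesis, some $v_0$ reaches two distinct sinks $s_1\neq s_2$. For $u\in V$, let $\phi(u)$ be the set of sinks reachable from $u$: it is non-empty and constant on each SCC. Call $u$ \emph{committed to $s$} when $\phi(u)=\{s\}$ (and let $\mathrm{Com}_s$ denote the set of such $u$; it is closed under $\delta$), and \emph{non-committed} otherwise; note $v_0$ is non-committed. The first crucial use of the condition is that \emph{no state has committed successors toward two distinct sinks}: if $\delta(u,c_1)\in\mathrm{Com}_{s_1}$ and $\delta(u,c_2)\in\mathrm{Com}_{s_2}$ with $s_1\neq s_2$, applying the condition with $a=c_1,b=c_2$ places the resulting $\tilde u$ in both $\mathrm{Com}_{s_1}$ and $\mathrm{Com}_{s_2}$, which are disjoint. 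Consequently, if every non-self-loop successor of a non-committed $u$ were committed, they would share a target sink $s^*$ and $\phi(u)=\{s^*\}$ would contradict non-committedness — hence every non-committed $u$ has a non-self-loop successor that is itself non-committed.

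Now pick a \emph{minimal} non-committed SCC $D$ reachable from $v_0$ in the SCC-DAG restricted to non-committed SCCs (it exists, since $v_0$'s SCC is non-committed and the sub-DAG is finite). A singleton $D=\{u^*\}$ would force the non-self-loop non-committed successor of $u^*$ to lie in a strictly smaller non-committed SCC, contradicting minimality; so $|D|\geq 2$. To show $D$ is label-rich, suppose for contradiction that some label $c$ is external at every $u\in D$, say $\delta(u,c)\in\mathrm{Com}_{t(u)}$. For an internal edge $u\xrightarrow{d}u'$ inside $D$, the condition with $a=c,b=d$ gives $\tilde u\in\mathrm{Com}_{t(u)}\cap\mathrm{Com}_{t(u')}$, forcing $t(u)=t(u')$; by strong connectedness of $D$ via internal labels, $t\equiv s^*$ on $D$. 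Then, for any other external label $c'$ at any $u\in D$ with $\delta(u,c')\in\mathrm{Com}_{t'(u)}$, the condition with $a=c',b=c$ forces $\tilde u\in\mathrm{Com}_{t'(u)}\cap\mathrm{Com}_{s^*}$, so $t'(u)=s^*$. Hence every external successor of $D$ lands in $\mathrm{Com}_{s^*}$, and $\phi(D)=\{s^*\}$ — contradicting that $D$ is non-committed. Therefore every label is internal at some state of $D$: $D$ is label-rich with $|D|\geq 2$, and concatenating internal paths in $D$ that each include one internal edge per label yields a closed walk of length $\geq 2$ using every label; its $\omega$-iterate is the required non-constant $L$-fair cycle.

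The main obstacle is the label-richness step for $D$: it orchestrates two complementary applications of the condition — the first to propagate the target sink $t$ along the internal edges of $D$ (exploiting internal strong connectedness), the second to pin every external label at each state to that common target $s^*$ — together producing the contradiction $\phi(D)=\{s^*\}$.
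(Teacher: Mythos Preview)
Your proof is correct and takes a genuinely different route from the paper's. The paper argues directly by path manipulation: it defines $T_i$ (your $\mathrm{Com}_{t_i}$) and $T_0$ (your non-committed set), then shows that from every $v\in T_0$ and for every label $a$ there is a path from $v$ containing $a$ that ends back in $T_0$; iterating this yields an $L$-fair path confined to $T_0$, hence a non-constant $L$-fair cycle. The contradiction step in the paper walks along a path from $v$ toward a ``wrong'' sink and locates a single transition where the $a$-successor switches commitment class, then applies the confluence hypothesis once at that spot. By contrast, you work structurally with the SCC decomposition: you isolate a minimal non-committed SCC $D$, show $|D|\ge 2$, and then prove $D$ is label-rich via two coordinated applications of the confluence hypothesis (first propagating the committed target along internal edges of $D$, then pinning every external edge to that same target). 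Your approach packages the invariant more cleanly and avoids the path-tracking bookkeeping of the paper's argument; the paper's approach is more elementary in that it never invokes SCCs, only the basic forward-closure of the $T_i$'s. Both hinge on the same idea --- the confluence hypothesis forbids a single state from having committed successors toward two distinct sinks --- but you apply it at the level of an SCC while the paper applies it at a single carefully located edge.
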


Thanks to this result, we can establish
Theorem~\ref{thm:one-equilibrium}.  We prove the contrapositive, as
follows.  We assume that $\graph{\game}{\BPCdyn}$ has more than one
equilibrium. We introduce a new dynamics $\leadsto$ (taking into
account the \emph{beliefs} of the players about the other players'
strategies) and we use Lemma~\ref{Lemma:ab=ba}, to show that
$\graph\game\leadsto$ has an \Lfair\ cycle.  Then, we define a partial
simulation $\sqsubseteq$ of $\graph\game\leadsto$ by
$\graph{\game}{\BPCdyn}$ and use Proposition~\ref{Cor:weak et
  co-weak=>same termination} to conclude that $\graph{\game}{\BPCdyn}$
has a cycle, which is fair. Hence, $\BPCdyn$ does not fairly
terminate.

\subparagraph{Griffin \textit{et al}: Dispute
  wheels.} Another classical notion in the BGP literature is that of a
\emph{dispute wheel} (DW for short), defined by Griffin \textit{et
  al.}~\cite{Griffin02} as a ``\emph{circular set of conflicting
  rankings between nodes}''. They have shown that the absence of a DW
is a sufficient condition for safety, which is of course of major
practical interest to prove that BGP will converge in a given
network. Moreover, a DW is an instance of a \emph{forbidden pattern}
in a game, and we will thus apply the results from
Section~\ref{Sec:Game}.

We start by formally defining a DW. Let
$\game=(V, E, (V_i)_{1\le i\le n}, (\preceq_i)_{1\le i\le n})$ be a
\OTG with $\perm_i$ the set of permitted paths of $v_i$. A triple
$D =(U, P, H)$ is a DW of $\game$ if:\\
(i) $U=(u_1,\ldots,u_k)\in V^k$ is a tuple of
  states;
(ii) $P=(\pi_1,\ldots,\pi_k)$ is a tuple of permitted paths such
  that for all $ 1\le i \le k$: $\pi_i \in \perm_{u_i}$, i.e., $\pi_i$
  is a permitted path starting in $u_i$ ;
(iii) $H=(h_1,\ldots,h_k)$ is a tuple of non-maximal paths such
  that for all $1\le i \le k$:
  $h_i\pi_{(i\mod k)+1}\in \perm_{u_i}$; and
(iv) for all $1\le i\le k$: $\pi_i\prec_{u_i} h_i \pi_{(i\mod k)+1}$.

Intuitively, in a DW, all players $u_i$ (for $i=1,\ldots, k$) can chose
between two paths to $\target$: either a `direct' path $\pi_i$, or an
`indirect' path $h_i\pi_{(i\mod k)+1}$, which traverses
$u_{(i\mod k)+1}$; and where the latter is always preferred. So $u_1$
prefers to reach through $u_2$, $u_2$ through $u_3$, and so on until
$u_k$ who prefers to reach through $u_1$. Such a conflict clearly
yields loops where the target is never reached. The game in
\figurename~\ref{Fig:DISAGREE+graphDyn} is a typical example of game
that has a DW, if we let $U=(v_1, v_2)$, $P=(v_1\target,v_2\target)$
and \todo{Was $(v_1v_2, v_2v_1)$, but then we have stuttering in the
  path :-) G.}$H=(v_1, v_2)$. Indeed,
$v_1\target\prec_1 v_1v_2\target$ and
$v_2\target\prec_2 v_2v_1\target$. Then, in our setting the sufficient
condition of Griffin \textit{et al.}~\cite{Griffin02} becomes:
\begin{theorem}[\cite{Griffin02}]\label{Thm:Griffin}
  Let $\game$ be a \OTG. If $\game$ has no DW then $\BPCdyn$
  fairly terminates for $\game$.
\end{theorem}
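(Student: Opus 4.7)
My plan is to prove the contrapositive: assuming that $\BPCdyn$ does not fairly terminate on the \OTG $\game$, I would construct a dispute wheel of $\game$. Start from an infinite fair run $\sigma^0 \BPCdyn \sigma^1 \BPCdyn \cdots$ in $\graph\game\BPCdyn$. Because the strategy profile space is finite, the run eventually enters a bottom strongly connected component, and after discarding a prefix I may assume that every visited profile recurs infinitely often. Let $U$ denote the set of players that change strategy infinitely often along this tail; by fairness, $U$ is exactly the set of players for whom a strict best-reply improvement is available at infinitely many positions, while players outside $U$ eventually stabilise on a fixed strategy.

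Next I would build the wheel inductively. Pick any $u_1 \in U$ and a step $t_1$ at which $u_1$ updates from $\pi_1 = \Outc{\sigma^{t_1-1}}{u_1}$ to $h_1 \pi'_2 = \Outc{\sigma^{t_1}}{u_1}$, where $h_1$ is the initial segment of the new route up to the vertex preceding the next-hop $u_2$, and $\pi'_2 = \Outc{\sigma^{t_1}}{u_2}$; the best-reply clause of $\BPCdyn$ forces $\pi_1 \prec_{u_1} h_1 \pi'_2$. The key observation is that $u_2 \in U$: otherwise $u_2$'s route would eventually freeze, the indirect route $h_1 \pi'_2$ would remain permitted and best from $u_1$'s viewpoint, and $u_1$ could never revert to $\pi_1$, contradicting the recurrence of $\sigma^{t_1-1}$. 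Iterating, I extract $u_3, u_4, \ldots \in U$; finiteness of $U$ forces a repetition, and after truncation I obtain a cycle $u_1, \ldots, u_k$ with paths $\pi_i$ and prefixes $h_i$ witnessing $\pi_i \prec_{u_i} h_i \pi_{(i \mod k)+1}$. Permittedness of the $\pi_i$ and of the $h_i \pi_{(i \mod k)+1}$ follows from the fact that in an \OTG a best-reply update always selects a permitted route over a forbidden one, combined with the suffix-closedness of $\perm$.

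The main obstacle will be a \emph{synchronisation} issue: the route $\pi'_{i+1}$ that $u_{i+1}$ presents to $u_i$ at time $t_i$ need not coincide literally with the path $\pi_{i+1}$ that $u_{i+1}$ itself is observed `leaving' at time $t_{i+1}$, since $u_{i+1}$ may have modified its strategy in between. To reconcile them, I would select the update times $t_1 > t_2 > \cdots > t_k$ by a backward sweep along the recurrent cycle, so that the snapshot of $u_{i+1}$ at time $t_i$ matches the one held immediately before $u_{i+1}$'s update at $t_{i+1}$. A conceptually cleaner alternative, more in the spirit of Section~\ref{Sec:Graphs}, is to introduce an auxiliary `belief' dynamics $\leadsto$ analogous to the one used in the proof sketch of Theorem~\ref{thm:one-equilibrium}, transport the infinite fair cycle of $\graph\game\BPCdyn$ along a partial simulation via Proposition~\ref{Cor:weak et co-weak=>same termination}, and read off the DW directly from a synchronised cycle in which every player updates based on mutually consistent beliefs about its neighbours' routes.
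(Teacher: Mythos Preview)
The paper does not give its own proof of Theorem~\ref{Thm:Griffin}: the result is quoted from Griffin, Shepherd and Wilfong~\cite{Griffin02} and used as a black box. The only place the paper revisits its proof is inside the argument for Theorem~\ref{Thm:Eqiv condition neighbour}, where it recalls that Griffin \textit{et al.}\ in fact establish the contrapositive in a strong form: from any fair cycle of $\graph{\game}{\PCdyn}$ one can extract not merely a DW but a DW+ (a dispute wheel satisfying the three extra disjointness conditions listed there). Your plan---prove the contrapositive by passing to a recurrent tail of a fair run, isolating the set $U$ of infinitely-often-updating players, and threading a wheel through successive improvement steps---is precisely the skeleton of the original Griffin \textit{et al.}\ argument, so in spirit you are reproducing the cited proof rather than offering an alternative.

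Two points in your sketch need tightening. First, your claim ``$u_2\in U$, otherwise $u_2$'s route would eventually freeze'' conflates $u_2$'s \emph{strategy} (its next-hop choice) with $u_2$'s \emph{route} (the outcome $\Outc{\sigma}{u_2}$): if $u_2\notin U$ only the former stabilises, while the latter may keep oscillating because of players further downstream. The standard fix is to take $u_{i+1}$ to be the \emph{first} vertex along the newly adopted route of $u_i$ that lies in $U$ (not the literal next-hop), which is also why the $h_i$ in the DW are genuine paths and not single edges. Second, your permittedness argument applies to the route $u_i$ switches \emph{to}, not to $\pi_i$, which is the route it switches \emph{from}; that old route could be forbidden. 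Griffin \textit{et al.}\ avoid this by taking $\pi_i$ to be the $\preceq_{u_i}$-minimal \emph{permitted} route that $u_i$ ever selects along the recurrent cycle, and then looking at the step where $u_i$ abandons it. Your alternative idea of transporting the fair cycle through a belief dynamics~$\leadsto$ via Proposition~\ref{Cor:weak et co-weak=>same termination} is appealing, but note that the paper deploys that machinery for Theorem~\ref{thm:one-equilibrium}, not for Theorem~\ref{Thm:Griffin}.
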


\subparagraph{New result: strong dispute wheels for a necessary
  condition} It is well-known, however, that the absence of a DW is
\emph{not necessary} (see for example
\figurename~\ref{Fig:ce-theorem-9} for a game that has a DW but where
$\BPCdyn$ terminates). As far as we know, finding a \emph{unique and
  necessary} condition for the \emph{fair} termination of $\BPCdyn$ in
\OTG{s} is still an open problem.

Relying on our framework, we manage to obtain such a \emph{necessary
  and sufficient} condition in a restricted setting. We first
strengthen the definition of DW by introducing the notion of
\emph{strong dispute wheel} (SDW for short). We then obtain two
original (as far as we know) results regarding SDW. First, the absence
of SDW is a \emph{necessary} condition for the termination of $\PCdyn$
(i.e.~we drop the best-reply and the fairness hypothesis). Second,
the absence of an SDW is also a \emph{sufficient} condition in the
restricted setting where the preferences of the players range only on
their next-hop. This means for example that $u_1$ prefers to reach the
target through $u_2$ rather than through $u_3$, but does not mind the
route $u_2$ uses (as long as $\target$ is reached). While this is a
restriction, we believe that it is still meaningful in practice, since
networks usually have little control about the routes chosen by their
neighbours.

We first define the notion of SDW. Let $\game$ be a \OTG and
$D =(U, P, H)$ be a DW of $\game$. Then, $D$ is a \emph{strong}
dispute wheel (SDW) of $\game$ if:
\begin{enumerate}
\item for all $1\leq i\leq k$: all states $u_i\in U$ occur \emph{only}
  in $\pi_i$, $h_i$ and $h_{i-1}$ (we identify $h_0$ with $h_k$) and
  not in the other paths of $P$ and $H$; and
\item for all $\pi_i, \pi_j\in P$, for all $h_k, h_\ell\in H$ with
  $k\neq \ell$: $\pi_i$, $h_k$ and $h_\ell$ share no states of
  $V\setminus U$, and if $\pi_i$ and $\pi_j$ share a state $v$ of
  $V\setminus U$ then $\pi_i$ and $\pi_j$ have the same suffix
  after~$v$.
\end{enumerate}
An important property of this definition is that, whenever a game
$\game$ contains an SDW $D=(U,P,H)$, we can extract a minor $\game'$
which is essentially an SDW restricted to the states of $U$ (formally,
$\game'$ contains an SDW $D'=(U',P',H')$ where $U'=U$ is the set of
states of $\game'$). We do so by first deleting from $\game$ all edges
that do not occur in $P$ and $H$; then all $v\not\in U$ (which have at
most one outgoing edge at this point), using the procedure described
in Section~\ref{Sec:Game}. Note that the two extra conditions in the
definition of an SDW guarantee that the deletion of all the states
$v\not\in U$ can occur.

\begin{theorem}\label{thm:sdw->not-termination}\label{Thm:DWAB=>not terminates}
  Let $\game$ be a \OTG. If $\PCdyn$ terminates for $\game$, then
  $\game$ has no SDW.
\end{theorem}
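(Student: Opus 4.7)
I will prove the contrapositive: assuming $\game$ contains an SDW $D = (U,P,H)$ with $U=(u_1,\ldots,u_k)$, I will exhibit an infinite path in $\graph{\game}{\PCdyn}$, thereby showing that $\PCdyn$ does not terminate on $\game$.

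The first step is to pass to a well-chosen minor. Following the construction sketched in the paragraph preceding the theorem, I would first delete every edge of $\game$ that does not appear in any path of $P$ or $H$, and then iteratively delete every state $v\notin U$ (each of which, after the edge deletions, has a unique outgoing edge). The two extra conditions defining an SDW are tailored exactly so that these state deletions are legal: condition~(2) ensures that no vertex $v\notin U$ has a predecessor that is also a successor of its unique child, so the deletion rule of Section~\ref{Sec:Game} applies at every step. Let $\game'$ be the resulting minor. By construction each $u_i$ has in $\game'$ exactly two outgoing edges, corresponding to $\pi_i$ (call it the ``direct'' edge $s_i$, leading eventually to $\target$ via the squeezed suffix of $\pi_i$) and to $h_i\pi_{(i\bmod k)+1}$ (the ``indirect'' edge $c_i$ going to $u_{(i\bmod k)+1}$). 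The SDW condition $\pi_i \prec_{u_i} h_i\pi_{(i\bmod k)+1}$ is preserved, because preferences in the minor are inherited from $\game$.

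The second step is to build an infinite path in $\graph{\game'}{\PCdyn}$ by alternating between two positional profiles: $\sigma^s$, in which every $u_i$ plays $s_i$, and $\sigma^c$, in which every $u_i$ plays $c_i$. From $\sigma^s$ the outcome at $u_i$ is the permitted path $\pi_i$; if $u_i$ alone were to switch to $c_i$, while every other $u_j$ kept $s_j$, the resulting outcome at $u_i$ would be precisely $h_i\pi_{(i\bmod k)+1}$, which by the SDW inequality is strictly preferred to $\pi_i$. Hence every individual switch from $\sigma^s$ to $\sigma^c$ is improving, so $\sigma^s \PCdyn \sigma^c$. Conversely, from $\sigma^c$ the outcome at $u_i$ is the cyclic (hence forbidden in the \OTG sense) path $u_1u_2\cdots u_ku_1\cdots$; if $u_i$ alone switched back to $s_i$, the outcome at $u_i$ would become the permitted path $\pi_i$, strictly better than any forbidden play. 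So $\sigma^c \PCdyn \sigma^s$ as well, yielding the infinite path $\sigma^s \PCdyn \sigma^c \PCdyn \sigma^s \PCdyn \cdots$ in $\graph{\game'}{\PCdyn}$. Thus $\PCdyn$ does not terminate on $\game'$.

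The third step is to lift this non-termination back to $\game$. By Theorem~\ref{Thm:minor of game=>simulation of graph} applied to $\PCdyn$, $\graph{\game}{\PCdyn}$ simulates $\graph{\game'}{\PCdyn}$, so by Proposition~\ref{Prop:simulation=>same termination}, if $\PCdyn$ terminated on $\game$ it would terminate on $\game'$. Since it does not, $\PCdyn$ does not terminate on $\game$ either, which is the contrapositive of the claim.

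The only delicate point I expect is the minor extraction itself: one must check that the ordering in which the vertices $v\notin U$ are deleted never violates the side condition $(u,v')\notin E$ of the state-deletion rule. This is precisely what SDW conditions~(1) and~(2) buy us, since they forbid any ``shortcut'' edge that could obstruct the squeezing; the rest of the argument is then a routine verification of the $\PCdyn$ step conditions (i)--(iii) at $\sigma^s$ and $\sigma^c$.
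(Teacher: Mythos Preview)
Your proposal is correct and follows essentially the same approach as the paper: extract the minor $\game'$ supported on $U\cup\{\target\}$ from the SDW, exhibit the two-profile cycle (your $\sigma^s,\sigma^c$ are exactly the paper's $\sigma_2,\sigma_1$), and invoke Theorem~\ref{Thm:minor of game=>simulation of graph} to lift non-termination back to $\game$. Your write-up is more explicit about the minor extraction and the verification of the $\PCdyn$ improvement conditions, but the underlying argument is identical.
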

\begin{proof} By Theorem~\ref{Thm:minor of game=>simulation of graph},
  it is sufficient to prove that the dynamics $\PCdyn$ does not
  terminate in the minor game $\game'$ extracted from the SDW (see
  above). We let, for all $1\le i \le k$,
  $\sigma_1(u_i)=u_{(i\mod k)+1}$, and $\sigma_2(u_i)=\target$. Since
  the path resulting from $\sigma_1$ does not visit $v_\bot$, by
  definition of an SDW, we have
  $\sigma_1\PCdyn\sigma_2\PCdyn\sigma_1$. Hence
  $\graph{\game'}{\PCdyn}$ contains a cycle.
\end{proof}
Thus, the absence of an SDW is a \emph{necessary} condition for the
termination of $\PCdyn$. We can further show that this condition is
\emph{sufficient} in the restricted case where any two (permitted)
paths that have the same next-hop are equivalent. Formally, let
$\game$ be a \OTG. We say that it is a \emph{neighbour one target
  game} (\NOTG for short) if for all players $i$, for all permitted
paths $ \pi_1,\pi_2\in \perm_i$ of player $i$: $\pi_1=v_iv\pi_1'$ and
$\pi_2=v_iv\pi_2'$ implies that $\pi_1\eq_i \pi_2$. Then, we can show
the following, relying on Theorem~\ref{Thm:DWAB=>not terminates} (and
thus, also on Theorem~\ref{Thm:minor of game=>simulation of graph}):
\begin{theorem}\label{Thm:Eqiv condition neighbour}
  Let $\game$ be a \NOTG. Then, $\PCdyn$ does not fairly terminate for
  $\game$ if and only if $\game$ has an SDW.
\end{theorem}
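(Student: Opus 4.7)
I will prove the two directions separately. The converse direction (SDW exists implies $\PCdyn$ does not fairly terminate) is a relatively direct strengthening of Theorem~\ref{Thm:DWAB=>not terminates}, while the direct direction (non-fair-termination implies an SDW exists) is the genuinely new content of the result, crucially exploiting the \NOTG assumption.

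For the converse direction, I adapt the 2-cycle construction in the proof of Theorem~\ref{Thm:DWAB=>not terminates} directly to $\game$ (not just to the minor $\game'$). Given an SDW $D=(U,P,H)$, I define $\sigma_1$ and $\sigma_2$ on all of $V$: on the wheel players, $\sigma_1(u_i)$ follows the first edge of $h_i$ and $\sigma_2(u_i)$ follows the first edge of $\pi_i$; on non-wheel players appearing in $P\cup H$, both profiles follow the corresponding path, which is well-defined thanks to the strong SDW conditions (paths may share nodes only in controlled ways). By the SDW preference $\pi_i\prec_{u_i} h_i\pi_{(i\mod k)+1}$, together with the fact that the outcome of $\sigma_1$ at each $u_i$ cycles through the wheel (hence is forbidden), we obtain $\sigma_1\PCdyn\sigma_2\PCdyn\sigma_1$. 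For fairness, I observe that non-wheel players on $\pi_i$ have outcomes determined only by nodes past $u_i$ on $\pi_i$ (by the strong condition that other $u_j$'s do not appear on $\pi_i$), so their outcome is identical at $\sigma_1$ and $\sigma_2$, and a best-reply choice for them leaves no persistent improvement; for non-wheel players on $h_i$, whose outcome can flip between permitted and forbidden, I interleave their updates into the path without disturbing the oscillation of the wheel, producing a fair infinite path in $\graph\game\PCdyn$.

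For the direct direction, I assume $\PCdyn$ does not fairly terminate and build an SDW. By finiteness of $\pstrat(\game)$, a fair infinite path contains a fair cycle $\sigma^0\PCdyn\sigma^1\PCdyn\cdots\PCdyn\sigma^N=\sigma^0$, taken minimal. Let $U=\{u_1,\ldots,u_k\}$ be the players that update at least once in this cycle. The key observation, using the \NOTG assumption, is that for each $u\in U$, the outcome from $v_u$ must be forbidden at some $\sigma^j$: indeed in \NOTG the preferences of each player induce a strict total order on its permitted next-hops (two permitted paths sharing a next-hop being equivalent), so if $u$'s outcome stayed permitted throughout, each update by $u$ would strictly improve its next-hop and prevent return to its initial strategy. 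Hence for every $u_i$ there is a moment when $u_i$'s outcome, being a lasso of a positional profile, enters a loop; since $u_i$ is itself on that loop or feeds into it, the loop traverses other players of $U$. Following these loops around, I extract a cyclic wheel ordering of (a subset of) $U$, which after re-indexing I still call $u_1,\ldots,u_k$. For each $u_i$, I then define $\pi_i$ as the permitted outcome of $u_i$ at some profile in the cycle where this outcome is permitted (which exists because $u_i$ does eventually switch, and an update from a forbidden outcome goes to a permitted one), and $h_i$ as the prefix of $u_i$'s forbidden outcome up to (but not including) $u_{(i\mod k)+1}$. Because the cycle updates are improvements in \NOTG, one obtains $\pi_i\prec_{u_i} h_i\pi_{(i\mod k)+1}$, so $(U,P,H)$ is a DW.

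The main obstacle is verifying the \emph{strong} SDW conditions: the paths $\pi_i$ and $h_j$ should not share states in $V\setminus U$ except in the restricted way allowed, and each $u_i$ should occur only in $\pi_i$, $h_i$ and $h_{i-1}$. This is where I expect to have to work, and to exploit the minimality of the cycle. If, say, some $u_j$ appeared unexpectedly inside $\pi_i$ or $h_\ell$, or two of the extracted paths shared a $V\setminus U$ node with divergent suffixes, I can shortcut or re-route the wheel through that node, producing a strictly shorter fair cycle and contradicting minimality; if two $\pi_i$'s share a $V\setminus U$ node then their suffixes from that node coincide because the node's strategy is fixed in the cycle (it is not in $U$). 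A careful case analysis of these overlaps, based on the positional nature of the strategies in the cycle and on the fact that every node outside $U$ has a fixed strategy, yields the strong conditions and completes the construction of the SDW.
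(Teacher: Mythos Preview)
Your proposal takes a substantially different route from the paper, and the forward direction contains a real gap.

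\textbf{Forward direction (non-fair-termination $\Rightarrow$ SDW).} The paper does \emph{not} extract the SDW directly from the fair cycle. Instead it invokes a strengthening, proved by Griffin \textit{et al.}\ in~\cite{Griffin02}, of Theorem~\ref{Thm:Griffin}: from a fair cycle one already gets a dispute wheel satisfying three extra non-overlap properties (what the paper calls a DW+). The paper's only new work is then to upgrade DW+ to SDW by establishing the two remaining disjointness conditions; this is done by an iterated \emph{wheel-shortcutting} argument that crucially uses the \NOTG hypothesis (if a violating overlap occurs at some node, reroute one $h_j$ through that node and drop intermediate wheel vertices; \NOTG guarantees the rerouted path is still preferred because only the next-hop matters). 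Your attempt instead reconstructs the whole wheel from scratch and appeals to minimality of the \emph{fair cycle in the dynamics graph} to force the SDW disjointness. That inference does not go through: shortening or rerouting the wheel lives at the game-graph level and has no obvious effect on the length of the dynamics-level cycle, so ``a strictly shorter fair cycle'' is not what you obtain from an overlap. The one part of your SDW verification that is solid is the suffix-agreement of two $\pi_i$'s at a shared non-$U$ node (because that node's strategy is fixed along the cycle); the other conditions need the wheel-level shortcutting, not cycle minimality.

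\textbf{Converse direction (SDW $\Rightarrow$ non-fair-termination).} The paper separates this into two steps already available: Theorem~\ref{Thm:DWAB=>not terminates} (SDW $\Rightarrow$ $\PCdyn$ has a cycle, via the minor and Theorem~\ref{Thm:minor of game=>simulation of graph}) and a short \NOTG-specific lemma that \emph{any} cycle in $\graph\game\PCdyn$ can be turned into a fair one: if some player $i$ is perpetually blocked, replace his component by a fixed best reply throughout the cycle; since in \NOTG the other players' preferences depend only on their own next-hop, their improvement steps are unaffected and the cycle survives. Your direct construction of a fair $2$-cycle in $\game$ is workable in spirit, but your fairness argument for non-wheel players on the $h_i$'s (``interleave their updates without disturbing the oscillation'') is exactly the place where you need the \NOTG decoupling, and you have not stated it. Routing the argument through the paper's ``cycle $\Rightarrow$ fair cycle'' step would close this cleanly.
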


\begin{proof}[Sketch of proof]
  In \cite{Griffin02}, Griffin \textit{et al}. prove a stronger result
  than Theorem~\ref{Thm:Griffin}, showing that \emph{if}
  $\graph\game\PCdyn$ has a fair cycle, \emph{then} $\game$ has a DW
  satisfying the following additional properties:
  (1) for all $ u_i\in U$: $j\neq i$ implies $u_i\notin \pi_j$;
  (2) for all $v\notin U$, for all $i,j$:
    $v\notin \pi_i\cap h_j$; and
  (3) for all $v\in \pi_i\cap \pi_j$: $\pi_i(v)=\pi_j(v)$.
  
  We call DW+ such DW. Then, the general schema of our proof is
  summarised in \figurename~\ref{fig:app:dw:summary}: first, we show
  that the existence of a DW+ implies an SDW by showing the required
  additional properties.  By Theorem~\ref{Thm:DWAB=>not terminates},
  this implies $\graph\game\PCdyn$ has a cycle. Then, we conclude by
  showing that this implies the existence of a fair cycle.
\end{proof}

\subparagraph{Finding an SDW in practice} Because of the intricate
definition of SDW, finding an SDW in a real network may be challenging
in practice. However, we have:
\begin{proposition}\label{prop:disagree}
  Let $\game$ be an \NOTG. Then $\PCdyn$ does not fairly terminate for
  $\game$ if and only if $\game^{\mathrm{DIS}}$ is a minor of $\game$.
\end{proposition}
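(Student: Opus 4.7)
The strategy is to reduce the statement via Theorem~\ref{Thm:Eqiv condition neighbour}: since $\game$ is a \NOTG, $\PCdyn$ does not fairly terminate for $\game$ if and only if $\game$ has an SDW. Thus it suffices to establish that $\game$ has an SDW if and only if $\game^{\mathrm{DIS}}$ is a minor of $\game$. The forward direction will build the minor explicitly from the SDW, and the converse will lift the canonical SDW of $\game^{\mathrm{DIS}}$ back to $\game$.

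For the direction SDW $\Rightarrow$ minor, I would start from an SDW $D=(U,P,H)$ with $U=(u_1,\ldots,u_k)$ (the case $k=1$ being ruled out by the simplicity of permitted paths, which prevents the concatenation $h_1\pi_1$ from being permitted) and construct $\game^{\mathrm{DIS}}$ as a minor in three phases. Phase~(a): delete every edge of $\game$ that does not lie on some $\pi_i$ or $h_j$. Phase~(b): iteratively delete every state in $V\setminus(U\cup\{v_\bot\})$. The strong disjointness conditions of the SDW ensure that, at each step, such a state has a unique outgoing edge in the current graph and that its predecessors do not already target the unique successor, so the state-deletion rule applies. After phases~(a) and~(b), the remaining graph is the $k$-cycle $u_1\to u_2\to\cdots\to u_k\to u_1$ together with one direct edge $u_i\to v_\bot$ for each $i$. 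Phase~(c): if $k>2$, iterate, for $i=k,k-1,\ldots,3$, an edge deletion of $u_i\to v_\bot$ followed by the state deletion of $u_i$; an easy induction shows each such step is legal, since after previous squeezings $u_i$'s only remaining outgoing edge leads to $u_1$ and the predecessor $u_{i-1}$ has no edge to $u_1$ yet. The resulting minor is (isomorphic to) the arena of $\game^{\mathrm{DIS}}$, and the preference translation along the minor operations, together with the \NOTG property (the fully squeezed path $h_2h_3\cdots h_k\pi_1$ shares its next-hop at $u_2$ with $h_2\pi_3$ and is therefore equivalent to it), yields the required strict preferences.

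For the converse, assume $\game^{\mathrm{DIS}}$ is obtained from $\game$ by a sequence of edge and state deletions. I would identify $v_1,v_2,v_\bot$ with the corresponding non-deleted vertices $u_1,u_2,v_\bot$ of $\game$ and lift each minor edge $(a,b)$ to the path in $\game$ obtained by re-inserting the squeezed intermediate states. Taking $\pi_i$ to be the lift of $(v_i,v_\bot)$ and $h_i$ the lift of $(v_i,v_{3-i})$ minus its last state, the basic DW inequalities follow from the preference translation built into the minor definition. The strong SDW conditions rely on the observation that every intermediate state $w$ of a lifted path had a unique outgoing edge at the time of its deletion: if two lifted paths share $w$, they must agree on the edge $w\to w'$ and, inductively, on their full suffix from $w$. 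Since $h_1$ and $h_2$ have distinct final vertices they cannot share any intermediate state, the same argument rules out sharing between any $\pi_i$ and $h_j$, while $\pi_1$ and $\pi_2$ can share only states from which both continue identically to $v_\bot$---exactly the disjointness and common-suffix properties demanded by the SDW definition.

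The main obstacle is the bookkeeping in phase~(b) of the forward direction: verifying the predecessor precondition at every iterative deletion requires a careful case analysis that leans on the pairwise disjointness clauses of the SDW definition. A secondary delicate point is that the preference obtained by composing many successive squeezings still certifies $\pi_2\prec_{u_2}h_2h_3\cdots h_k\pi_1$; this is precisely where the \NOTG hypothesis becomes essential, reducing the claim to the known SDW inequality $\pi_2\prec_{u_2}h_2\pi_3$ via next-hop equivalence.
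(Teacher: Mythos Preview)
Your proposal is correct and closely follows the paper's strategy, with one organisational difference worth noting.

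For the direction ``SDW $\Rightarrow$ $\game^{\mathrm{DIS}}$ is a minor'', the paper does the two reductions in the opposite order from you: it first collapses the $k$-element SDW to a $2$-element one by setting $h_2'=h_2h_3\cdots h_k$ and invoking the \NOTG equivalence $h_2'\pi_1\eq_2 h_2\pi_3$, and only then appeals to the general minor-extraction from an SDW (the construction described just before Theorem~\ref{thm:sdw->not-termination}). You instead extract the full $k$-wheel as a minor first and then squeeze it down to two nodes. These are the same operations commuted; both rely on the \NOTG hypothesis at exactly the point you identify, and your careful bookkeeping in phase~(b) makes explicit what the paper leaves to the earlier paragraph.

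For the converse, the paper takes a shorter and slightly different route than your direct ``minor $\Rightarrow$ SDW'' lifting: it argues ``$\game^{\mathrm{DIS}}$ is a minor $\Rightarrow$ $\PCdyn$ does not fairly terminate'' by citing Theorem~\ref{Thm:Eqiv condition neighbour} (implicitly via Theorem~\ref{Thm:minor of game=>simulation of graph} to get a cycle in $\graph{\game}{\PCdyn}$, together with the cycle-$\Rightarrow$-fair-cycle step established inside the proof of Theorem~\ref{Thm:Eqiv condition neighbour} for \NOTG{s}), and then closes the circle ``not fairly terminate $\Rightarrow$ SDW $\Rightarrow$ minor''. Your explicit lifting argument is longer but self-contained and avoids having to unpack what ``by Theorem~\ref{Thm:Eqiv condition neighbour}'' really means in that first step; the paper's route is terser but leans on machinery already set up.
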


 \begin{figure}[tbp]
 \centering
\scalebox{.8}{\begin{tikzpicture}

\draw (0,1.4) node {$\BPCdyn$ does not fairly};
\draw (0,1.2) node (BFC) [inner xsep=45pt] {\hspace{12mm}};
\draw (0,1) node (BFC1){terminate for $\game$};

\draw (4.5,1.4) node {$\PCdyn$ does not fairly};
\draw (4.5,1.2) node (FC) [inner xsep=45pt] {\hspace{12mm}};
\draw (4.5,1) node (FC1){terminate for $\game$};

\draw (9,1.4) node {$\PCdyn$ does not};
\draw (9,1.2) node (C) [inner xsep=38pt] {\hspace{12mm}};
\draw (9,1) node (C1){terminate for $\game$};

\draw (0,0) node (DW) {$\game$ has a DW};
\draw (4.5,0) node (DWA) {$\game$ has a DW+};
\draw (9,0) node (DWAB) {$\game$ has an SDW};

\draw[->,  thick] (BFC) to (FC);
\draw[->, thick] (FC) to (C);

\draw[->, thick] (DWAB) to (DWA);
\draw[->, thick] (DWA) to (DW);

\draw[->, double,thick] (BFC1) to node[right]{Thm~\ref{Thm:Griffin} \cite{Griffin02}} (DW);
\draw[->, thick] (FC1) to node[left]{\cite{Griffin02}} (DWA);
\draw[->, double, thick] (DWAB) to node[right]{Thm~\ref{thm:sdw->not-termination}} (C1);
\draw[<->,double,thick,dashed](DWAB) to node[above,sloped]{Thm~\ref{Thm:Eqiv condition neighbour}} (FC1);

\end{tikzpicture}}
 \caption{Relationship between SDW and prior
   results: dashed arrow only holds for \NOTG.}
 \label{fig:app:dw:summary}
 \end{figure}

%

\section{Perspectives}\label{Sec:FW}

We envision multiple directions of future work. First, we could
consider games with imperfect information. In the
application to interdomain routing for example, this could be used to
model a malicious router that advertises lies to selected
neighbours. Advertising a non-existent or non-feasible path would allow
for example an attacker to attract the packets of an opponent's
network.
Second, we could investigate a better way to model asynchronicity
(useful for the routing problem) than the concurrent dynamics we have
studied here. Third, we chose to model fairness via a qualitative
property which ensures that \emph{all the players will eventually have
  the opportunity to update their strategies if they want to}. An
alternative way could be the use of probabilities: indeed, there are
games for which a dynamics $\to$ does not fairly terminate, but where
an equilibrium is reached almost surely when interpreting
$\graph\game\to$ as a finite Markov chain (with uniform
distributions).
Finally, we could apply the dynamics of graph-based games to other
problems than interdomain routing, like \emph{load sensitive routing}.


\clearpage

\bibliographystyle{plainurl}

\bibliography{biblio}

 \clearpage
 \appendix

 \appendix

\section{Appendix to Section~\ref{Sec:Game}}\label{sec:append-minor}

\subsection{Missing proofs}

\begin{proof}[Proof of Theorem~\ref{Thm:minor of game=>simulation of graph}] 
Let $\game$ be a game, $\game'$ be a minor of $\game$ and
  ${\to}\in\{\Odyn,\POdyn,\PCdyn\}$. We have to prove that $\graph\game\to$ simulates
  $\graph{\game'}\to$.

Since simulations are preorders, they are transitive
  relations (if $G$ simulates $G'$ and $G'$ simulates $G''$, then
  $G$ simulates $G''$). Therefore, as $\game'$ has been obtain from $\game$ by a (finite) sequence of deletion of edges and nodes, we let $\game=\game_0, \game_1,\ldots,\game_k=\game'$ such that $\forall 1\le i\le k,\ \game_{i+1}$ is a minor of $\game_i$ obtain from $\game_i$ either by deleting a
  single edge, or by deleting a single state. If we prove that, $\forall 1\le i\le k$, $\graph{\game_i}\to$ simulates
  $\graph{\game_{i+1}}\to$ we would have proven our result by transitivity. 
  
  Then, it is sufficient to only consider
  that $\game'$ has been obtained from $\game$ either by deleting a
  single edge, or by deleting a single state.

\vspace*{5mm}

  First, let us consider the case where $\game'$ is obtained by the
  deletion of an edge $e_0$. For the dynamics $\Odyn$, we define a
  relation ${\sqsubseteq}$, subset of
  $\Sigma(\game')\times \Sigma(\game)$, by letting
  $\sigma'\sqsubseteq \sigma$ if for all histories $h\in\hist(\game')$
  (i.e.~$h$ does not go through $e_0$), we have
  $\sigma(h)=\sigma'(h)$. This relation has $\Sigma(\game')$ as a
  domain. We are going to show that this is a simulation relation. To
  do so, consider $\sigma'$ and $\tau'$ in $\Sigma(\game')$ such that
  $\sigma'\Odyn\tau'$, and a strategy $\sigma\in\Sigma(\game)$ such
  that $\sigma'\sqsubseteq \sigma$. Let $\tau\in\Sigma(\game)$ be
  defined for a history~$h$ of $\game$ by $\tau(h) = \tau'(h)$ if
  $h\in\hist(\game')$, and $\tau(h) = \sigma(h)$ otherwise. We must
  prove that $\sigma\Odyn\tau$. Let  $h'\in\hist(\game')$ be the
  history s.t.
  \begin{inparaenum}[(1)]
  \item $\sigma'(h')\neq \tau'(h')$; and
  \item for all histories $h\in\hist(\game')\setminus\{h'\}$:
    $\sigma'(h)=\tau'(h)$.
  \end{inparaenum}
  Such a history $h'$ exists since $\sigma'\Odyn\tau'$.
%
  Moreover, if $i=\player{h'}$, then
  $\Outc{\sigma'}{h'}\prec_i\Outc{\tau'}{h'}$, again because
  $\sigma'\Odyn\tau'$. Then, we have
  $\tau(h') = \tau'(h') \neq \sigma'(h') = \sigma(h')$. Moreover, for
  all $h\in\hist(\game)\setminus\{h'\}$, if $h\notin\hist(\game')$
  then $\tau(h)=\sigma(h)$ by definition, and otherwise
  $\tau(h)=\tau'(h)=\sigma'(h)=\sigma(h)$. Therefore,
  $\Outc{\sigma}{h'}\prec_i\Outc{\tau}{h'}$, so that
  $\sigma\Odyn\tau$.  We thus have proved that
  $\graph{\game'}{\Odyn}\sqsubseteq \graph{\game}{\Odyn}$.
  
  The same kind of reasoning can be made for $\POdyn$ and
  $\PCdyn$. The situation is easier in the case of positional
  strategies, since $\game$ and $\game'$ have the same set of states
  (as we have assumed we have only deleted an edge). Let
  ${\sqsubseteq}$, subset of $\pstrat(\game')\times \pstrat(\game)$ be
  such that $\sigma'\sqsubseteq \sigma$ if
  $\forall v\in V,\ \sigma'(v)=\sigma(v)$ (note that, although
  $\sigma$ and $\sigma'$ take the same decisions in the same states,
  they are, formally, different objects. Indeed, they map histories,
  which are different in $\game$ and $\game'$, to states). This
  relation has $\pstrat(\game')$ as domain, and for all
  $\sigma'\in \pstrat(\game')$ there is one and only one
  $\sigma\in \pstrat(\game)$ such that $\sigma'\sqsubseteq \sigma$.
  We are going to show that this is a simulation relation. To do so,
  consider $\sigma'$ and $\tau'$ in $\pstrat(\game')$ such that
  $\sigma'\POdyn\tau'$ (resp. $\sigma'\PCdyn\tau'$), and a strategy
  $\sigma\in\pstrat(\game)$ such that $\sigma'\sqsubseteq \sigma$. Let
  $\tau\in\pstrat(\game)$ be the (unique) strategy profile such that
  $\tau'\sqsubseteq \tau$. By definition of $\POdyn$ (resp., $\PCdyn$)
  and $\sqsubseteq$, we have $\sigma\POdyn \tau$ (resp.,
  $\sigma\PCdyn \tau$).
  \vspace*{5mm}
  
  Second, consider the case where $\game'$ is obtained by the deletion of
  a state $v$. If $v$ is an isolated node, the result is trivially true as $\graph\game\to =\graph{\game'}\to$. Then, we suppose that $v$ is not an isolated node. By definition of minor, it means that $v$ has a unique outgoing edge $(v,v')$ and all
    predecessors $u$ of $v$ do not have $v'$ as a
    successor.
  
   For the dynamics $\Odyn$, if $h\in\hist(\game)$ such
  that $last(h)\neq v$, we can construct a corresponding play $f(h)$
  of $\game'$ by replacing a sequence $uvv'$ of $h$ by $uv'$. Notice
  that $f$ is a bijection from
  $\hist(\game)\setminus \{h\mid last(h)=v\}$ to $\hist(\game')$, by
  the conditions over the deletion of $v$ in the definition of minors.
  Then, the relation on strategy profiles is given by
  $\sigma'\sqsubseteq \sigma$ if:
  \begin{align*}
    \text{for all }h\in \hist(\game)\setminus \{h\mid last(h)=v\} :
    \sigma'(f(h))
    &=
      \begin{cases}
        \sigma(h) &\text{if }\sigma(h)\neq v\\
        \sigma'(f(h))=v'&\text{otherwise.}
      \end{cases}        
  \end{align*}
  Since $f$ is bijective and since, for all $\sigma$ and $h$:
  $last(h)=v$ implies $\sigma(h)=v'$; then the relation $\sqsubseteq$
  is also a bijection. This enables us to prove as before that it is a
  simulation relation (indeed even a bisimulation relation), so that
  $\sqsubseteq$ is a simulation relation of $\graph{\game'}{\Odyn}$ by
  $\graph{\game}{\Odyn}$, in that case too.
  
  We can make the same reasoning for $\POdyn$ and $\PCdyn$. Let
  $\sqsubseteq$, subset of $\pstrat(\game')\times \pstrat(\game)$
  such that $\sigma'\sqsubseteq \sigma$ if:
  \begin{align*}
    \text{for all } u\in V\setminus \{v\} : \sigma'(u)
    &=
      \begin{cases}
        \sigma(u) &\text{if }\sigma(u)\neq v\\
        v' &\text{if }\sigma(u)=v.
      \end{cases}
  \end{align*}
  It is clear that this relation is a bisimulation between
  $\graph{\game'}\POdyn$ and $\graph\game\POdyn$
  (resp. $\graph{\game'}\PCdyn$ and $\graph\game\PCdyn$).
\end{proof}


\begin{proof}[Proof of Theorem \ref{Thm:dominant minor}]

Let $\game$ be a game, $\game'$ be a dominant minor of $\game$ and 
  ${\to} \in \{\BPOdyn,\BPCdyn\}$. We have to prove that we can build a simulation
  $\sqsubseteq$ of $\graph{\game'}{\to}$ by $\graph{\game}{\to}$ such
  that:
  \begin{inparaenum}[(i)]
  \item $\sqsubseteq^{-1}$ is a partial simulation of
    $\graph{\game}{\to}$ by $\graph{\game'}{\to}$; and
  \item if there is a fair
  cycle in $\game$ then there is a fair cycle in $\game'$.
  \end{inparaenum}
  In particular, the dynamics $\to$ fairly terminates for $\game$ if
  and only if it does for $\game'$.

  We follow the same proof as in Theorem~\ref{Thm:minor of
  game=>simulation of graph}. We have already seen that if $\game'$
  is obtained by the deletion of a state, then $\game$ and $\game'$
  are bisimilar, and we can show easily that a fair cycle in
  $\graph\game{\to}$ implies the existence of a fair cycle in
  $\graph{\game'}{\to}$.

  If $\game'$ is obtained by the deletion of a \emph{dominated} edge
  $e_0$, then we construct the same simulation relation $\sqsubseteq$ as
  before (in the cases $\POdyn$ and $\PCdyn$),
  i.e. $\sigma'\sqsubseteq \sigma$ if $\forall v\in V$:
  $\sigma'(v)=\sigma(v)$.  This is still a simulation relation, even
  with the best-reply dynamics, since we remove a dominated edge. We
  can also show similarly that $\sqsubseteq^{-1}$ is a partial
  simulation, by using once again that a change of profile towards $e_0$
  is not possible since it is dominated. \todo{M:Je ne pense pas que ce
    soit nécessaire d'être plus formel...} If there is a fair cycle in
  $\graph{\game'}{\to}$ then the simulation gives a cycle in
  $\graph{\game}{\to}$: this is also a fair cycle since the edge $e_0$
  is dominated and thus cannot be the unique cause of
  non-fairness. Reciprocally, the partial simulation
  $\sqsubseteq^{-1}$ allows one to reconstruct from a fair cycle in
  $\graph{\game}{\to}$ a cycle in $\graph{\game'}{\to}$: if this cycle
  is not fair, this means that the edge $e_0$ is chosen infinitely often
  in the cycle of $\game$, which violates the best-reply condition of
  the dynamics $\to$ in $\game$.
\end{proof}

 \subsection{Examples related to Theorem \ref{Thm:minor of
     game=>simulation of graph} and
   Theorem~\ref{theo:fair-termination}}

As seen in example \figurename~\ref{Fig:ce-theorem-9}, Theorem~\ref{Thm:minor of game=>simulation of graph}, does not works for $\BPCdyn$ and $\BPOdyn$.

%
%

\begin{figure}\centering
\begin{tikzpicture}
	\draw (0,2) node [state] (j1) {$v_1$};
	\draw (3,2) node [state] (j2) {$v_2$};
	\draw (1.5,0) node [state] (cible) {$v_\bot$};
        \draw (1.5,1.5) node[state, very thick] (j3) {$v_3$} ;

	\draw[->,bend left=10]  (j1) to node[above]{$c_1$} (j2);
	\draw[->,bend left=10, very thick]  (j2) to node[below]{$c_2$}
        (j3);
	\draw[->,bend left=10, very thick]  (j3) to node[below]{$c_3$} (j1);	
	\draw[->]  (j1) to node[left] {$s_1$} (cible);
	\draw[->]  (j2) to node[right] {$s_2$} (cible);
        \draw[->, very thick]  (j3) to node[right] {$s_3$} (cible);

\end{tikzpicture}\hspace*{10mm}
\begin{tikzpicture}
  \draw[fill=gray!20] (-1, -.5) rectangle (3,2.5) ;

	\draw (0,2) node [state,rectangle] (cc) {$c_1c_2 c_3$};
	\draw (2,2) node [state,rectangle] (sc) {$s_1c_2 c_3$};
	\draw (0,0) node [state,rectangle] (cs) {$c_1s_2 c_3$};
	\draw (2,0) node [state,rectangle] (ss) {$s_1s_2 c_3$};

    \draw[->] (cc) to (sc);
    \draw[->] (cc) to (cs);
    \draw[->] (ss) to (sc);    
    \draw[->] (ss) to (cs);
    
    \draw[->, bend left=10] (ss) to (cc);    
    \draw[->, bend left=10] (cc) to (ss);

    \draw[fill=gray!20] (4, -.5) rectangle (8,2.5) ;

	\draw (5,2) node [state,rectangle] (ccs) {$c_1c_2 s_3$};
	\draw (7,2) node [state,rectangle] (scs) {$s_1c_2 s_3$};
	\draw (5,0) node [state,rectangle] (css) {$c_1s_2 s_3$};
	\draw (7,0) node [state,rectangle] (sss) {$s_1s_2 s_3$};

    \draw[->] (ccs) to (scs);
    \draw[->] (ccs) to (css);

    \draw[->, bend left=10] (ccs) to (sss);

    \draw[->, bend left=10] (3, 2) to (4,2) ;
    \draw[->, bend right=10] (3, 0) to (4,0) ;
  \end{tikzpicture}

  \caption{Left: a 3-player game $\game$ with $\game^{\mathrm{DIS}}$
    (\figurename~\ref{Fig:DISAGREE+graphDyn}) as a minor. Right:
    $\graph{\game}\PCdyn$ with a non-fair cycle.}
   \label{Fig:ce-theorem-10-1}
 \end{figure}

 \figurename~\ref{Fig:ce-theorem-10-1} shows why
 Theorem~\ref{Thm:minor of game=>simulation of graph} does not work
 with fair termination (instead of termination).
 The game $\game$ on the left of the figure is a $3$-player game,
 where Player $i$ owns state $v_i$ (for $i=1,2,3$), and where the
 players' preferences are as follows:
 \begin{itemize}
 \item $v_1v_\bot\prec_1v_1v_2v_\bot$;
 \item $v_2v_\bot\prec_2v_2v_3v_1v_\bot$; and
 \item Player 3 prefers $v_3v_\bot$ to all other paths.
 \end{itemize}
 Clearly, $\game^{\mathrm{DIS}}$ (\figurename~\ref{Fig:DISAGREE+graphDyn}) is a
 minor of $\game$. We already know
 (\figurename~\ref{Fig:DISAGREE+graphDyn} again) that
 $\graph{\game^{\mathrm{DIS}}}{\PCdyn}$ admits a \emph{fair} cycle, so, if
 Theorem~\ref{theo:fair-termination} were to hold on $\PCdyn$,
 $\graph{\game}{\PCdyn}$ should admit a \emph{fair} cycle too. Let us
 explain why it is not the case.  The graph $\graph{\game}{\PCdyn}$ is
 sketched on the right-hand side of
 \figurename~\ref{Fig:ce-theorem-10-1}. All the nodes in the left gray
 boxes are those where Player 3 plays $c_3$, and those in the right
 box are where he plays $s_3$. Several edges are not drawn for
 clarity: they are all the edges where Player $3$ changes from $c_3$
 to $s_3$. Given his preferences, he will \emph{always} choose to do
 so if he decides to update his strategy, so there is an implicit edge
 from $c_1c_2c_3$ to $c_1c_2s_3$, but also one from $c_1c_2c_3$ to
 $c_1s_2s_3$ for example. Then, as soon as Player $3$ has decided to
 play $s_3$, the other two players will both prefer to play $s_1$ and
 $s_2$ respectively. As a consequence, the only cycle is 
 $(c_1c_2c_3,s_1s_2c_3)^\omega$, where Player $3$ never plays although he
 could. This cycle is not fair. Hence, $\graph{\game}{\PCdyn}$ admits
 no fair cycle.

 \begin{figure}
   \centering
   \begin{tikzpicture}
     \node[state] at (0,0) (v4) {$v_4$} ;
     \node[state] at (1.5,0) (t) {$v_\bot$} ;
     \node[state] at (1.5,1.5) (v1) {$v_1$} ;
     \node[state] at (3, 0) (v2) {$v_2$} ;
     \node[state] at (3,1.5) (v3) {$v_3$} ;

     \path[->]
     (v1) edge[very thick] node[left] {$e$} (t)
          edge node[above] {$f$} (v4)
          edge (v2)
     (v2) edge (v3)
          edge (t)
     (v3) edge (v1)
          edge (t)
     (v4) edge (t) ;
   \end{tikzpicture}
   \caption{A game $\game$ where $\graph\game\PCdyn$ does not
     terminate. When removing the edge $e$ (which is dominated by
     $f$), we obtain a new game $\game'$ which is a dominant minor
     of this game, in which $\graph{\game'}{\PCdyn}$
     terminates. \label{fig:ce-theorem-10-2}}
 \end{figure}

\vspace*{1mm}

 Finally, we exhibit a third example of a game $\game$ in which
 $\graph\game\PCdyn$ does not terminate, and from which we can extract
 a dominant minor $\game'$ s.t. $\graph{\game'}{\PCdyn}$
 terminates. This shows why Theorem~\ref{Thm:dominant minor} does not
 hold anymore when the `best reply' hypothesis is dropped. This game
 is showed in \figurename~\ref{fig:ce-theorem-10-2}, with the
 following preferences for the players:
 \begin{itemize}
 \item $v_1v_2v_3v_\bot\prec_1 v_1v_\bot\prec_1v_1v_4v_\bot\prec_1
   v_1v_2v_\bot$;
 \item
   $v_2v_3v_1v_4v_\bot\prec_2 v_2v_\bot\prec_2 v_2v_3v_1v_\bot\prec_2
   v_2v_3v_\bot$;
 \item
   $v_3v_1v_4v_\bot\prec_3 v_3v_1v_2v_\bot\prec_3 v_3v_\bot\prec_3
   v_3v_1v_\bot$.
 \end{itemize}
 
 Then, one can check that the following sequence of (positional)
 strategy profiles forms a cycle for $\PCdyn$ (but not for
 $\BPCdyn$). Note that we do not indicate $\sigma_4$ which is always
 equal to $v_\bot$, and we write in bold the change from the previous
 profile:
 \begin{enumerate}
 \item $\sigma_1=v_2$, $\sigma_2=v_\bot$, and $\sigma_3=v_\bot$;
 \item $\sigma_1=v_2$, $\pmb{\sigma_2=v_3}$, and $\sigma_3=v_\bot$;
 \item $\pmb{\sigma_1=v_\bot}$, $\sigma_2=v_3$, and $\sigma_3=v_\bot$;
 \item $\sigma_1=v_\bot$, $\sigma_2=v_3$, and $\pmb{\sigma_3=v_1}$;
 \item $\pmb{\sigma_1=v_4}$, $\sigma_2=v_3$, and $\sigma_3=v_1$;
 \item $\sigma_1=v_4$, $\pmb{\sigma_2=v_\bot}$, and $\sigma_3=v_1$;
 \item $\sigma_1=v_4$, $\sigma_2=v_\bot$, and $\pmb{\sigma_3=v_\bot}$;
 \item $\pmb{\sigma_1=v_2}$, $\sigma_2=v_\bot$, and $\sigma_3=v_\bot$;
 \end{enumerate}

 However, when removing from $\game$ the edge $e$ (which is dominated
 by $f$), one obtains a new game $\game'$ that is a dominant minor of
 $\game$ and in which the $\PCdyn$ terminates.
 
 

\section{Appendix to
  Section~\ref{sec:applications}} \label{app:applications}
\subsection{Termination implies a unique equilibrium}

\begin{proof}[Proof of Lemma~\ref{Lemma:ab=ba}]

Let $G=(V,E,L)$ be a finite complete deterministic labelled graph
  satisfying: for all $v\in V$, for all $a,b\in L$, there are
  $A, B \in L^*$ and $ \tilde{v} \in V$ such that
  $v\overset{aA}{\to} \tilde{v}$ and $v\overset{baB}{\to} \tilde{v}$, and such that there exists a state from which we can reach two different sinks. We have to prove that $G$ has a non constant \Lfair\ cycle.

  Let $\{t_1,\ldots t_k\}$ be the set of sinks. Given $t_i$
  a sink, we define $T_i$ as the set of states from which
  all paths lead to $t_i$, i.e.
  \begin{align*}
    T_i&=\{v\in V\mid \forall w\in V, (v,w)\in E,\text{ implies that }
         w\in T_i\text{ and there is a path from } v \text{ to } t_i\}.
  \end{align*}
  Let $T_0=V\setminus(T_1\cup\ldots\cup T_k)$. By hypothesis,
  $T_0\neq \emptyset$.

Let $L=(l_1,\ldots,l_k)$ and let $v_0\in T_0$. From $v_0$, we define $\pi=v_0,v_1,\ldots$ such that $\forall i\ge 0,\ v_i\overset{l_{(i+1)\mod k}}{\longrightarrow} v_{i+1}$. As $G$ is finite, $\pi$ is a lasso, ie $\exists \pi_1,\ \pi_2$ finite paths such that $\pi=\pi_1\pi_2^\omega$. In particular, let $\pi'=\pi_2^\omega$ be a cycle. Moreover, $\pi$ is \Lfair\ by construction, then so is $\pi'$. Two possibilities : either $\pi'$ is not constant, and the Lemma is then satisfies as we have found a non constant \Lfair\ cycle in $G$; or $\pi'$ is contant, which means that $\pi_2=t_i$ for some $i$. In this case, $v_0$ has a path to some $t_i$.

To conclude the proof, we just have to consider the case when every $v\in T_0$ has a path to some $t_i$.

Moreover, for $v\in T_0$, if $v$ cannot reach any other $t_j$, it means that $v\in T_i$, by definition of $T_i$. Then we can suppose that from every $v\in T_0$, we
  can reach at least two terminal states.

  In order to prove that there is a non constant \Lfair\ cycle in $G$,
   we just have to prove that, from any $v\in T_0$, there exists a
  \Lfair\ path that stays in $T_0$, i.e.
  $\forall v\in T_0$, $\exists \mathcal{A}$ such that if
  $\forall a\in L$: $a\in \mathcal{A}$ and
  $v\overset{\mathcal{A}}{\to} v'\in T_0$. In fact, it is sufficient
  to prove that
  $\forall v\in T_0$, $\forall a\in L$, $\exists \mathcal{A}$ such that
  $a\in \mathcal{A}$ and such that
  $v\overset{\mathcal{A}}{\to} v'\in T_0$. The proof of this equivalence is made by iteration and is left to the reader.

  Towards a contradiction, let us assume that $\exists v\in T_0$,
  $\exists a\in L$ s.t. $\forall \mathcal{A}$: if $a\in \mathcal{A}$,
  then $v\overset{\mathcal{A}}{\to} v'\notin T_0$. In particular,
  $v\overset{a}{\to}v_1\in T_i$ with $i\neq 0$. Notice that this path
  exists because the graph is complete, and is unique because the
  graph is deterministic. As $v\in T_0$, it means that exists a path
  to $t_j$ with $j\neq 0$ and $j\neq i$. Let
  $v\to v_1\to\ldots\to v_k$ with $v_{k-1}\in T_0$ and $v_k\in
  T_j$. $\forall l\in \{1,\ldots,k-1\}$, let $s_l$ such that
  $v_l\overset{a}{\to}s_l$. By hypothesis, $s_l\notin T_0$ (otherwise there exists a path from $v$ and containing $a$ that leads to some $v'\in T_0$). Then let
  $\ell$ be such that $s_\ell\in T_{i'}$ and $s_{\ell+1}\in T_{j'}$
  such that $i'\neq j'$; and let us assume that
  $v_\ell\overset{b}{\to} v_{\ell+1}$. By hypothesis, there are
  $\mathcal{A}$, $\mathcal{B}$ and $v^*$ such that:
  \begin{itemize}
  \item $v_\ell\overset{a}{\to} s_\ell \overset{\mathcal{A}}{\to}v^*$
    with $v^*\in T_{i'}$ (the latter holds since $s_\ell\in T_{i'}$); and
  \item
    $v_\ell\overset{b}{\to} v_{\ell+1}\overset{a}{\to}
    s_{\ell+1}\overset{\mathcal{B}}{\to} v^*$ with $v^*\in T_{j'}$
    (the latter holds since $s_{\ell+1} \in T_{j'}$).
  \end{itemize}
  This leads to a contradiction as $T_{i'}\cap T_{j'}=\emptyset$. This
  situation is modelled by \figurename~\ref{Fig:Thm 2 sink}

\begin{figure}
  \begin{center}
    \begin{tikzpicture}
      \draw (1,2) node (p0){$v$};
      \draw (2,2) node (r1){$ $};
      \draw (2.5,2) node (r2){$ $};
      \draw (3.5,2) node (pi1){$v_\ell$};
      \draw (5,2) node (pi){$v_{\ell+1}$};
      \draw (6,2) node (r3){};
      \draw (6.5,2) node (r4){};
      \draw (7.5,2) node (pk){$v_k$};
      
      \draw (3.5,1) node (qi1){$s_\ell$};
      \draw (5,1) node (qi){$s_{\ell+1}$};

      \draw (3,1) node (in){$\ni$};
      \draw (5.6,1) node (in){$\in$};
      
      \draw (2.5,1) node (red){$T_{i'}$};
      \draw (6,1) node (blue){$T_{j'}$};
      
      \draw[->] (p0) to (r1);
      \draw[dotted] (r1) to (r2);
      \draw[->] (r2) to (pi1);
      \draw[->] (pi1) to node[above]{$b$} (pi);
      \draw[->] (pi) to (r3);
      \draw[dotted] (r3) to (r4);
      \draw[->] (r4) to (pk);
      
      \draw[->] (pi1) to node[left]{$a$} (qi1);
      \draw[->] (pi) to node[left]{$a$} (qi);
      
      \draw (4, 0) node (sigma){$v^*$};
      \draw [->] (qi1) to node[left]{$\mathcal{A}$}(sigma);
      \draw [->](qi) to node[right]{$\mathcal{B}$}(sigma);
      
      \draw (3.5,0) node (in){$\ni$};
      \draw (4.5,0) node (in){$\in$};
      
      \draw (3,0) node (red){$T_{i'}$};
      \draw (5,0) node (blue){$T_{j'}$};
    \end{tikzpicture}
  \end{center}
   \caption{Construction of the contradiction in the case of Lemma\ref{Lemma:ab=ba} \label{Fig:Thm 2 sink}}
 \end{figure}
\end{proof}

\begin{proof}[Proof of Theorem~\ref{thm:one-equilibrium}]

Let $\game$ be a \OTG\ such that ${\BPCdyn}$ fairly terminates for $\game$
  (i.e.~the corresponding instance of SPP is safe). We have to prove that $\game$ has
  exactly one equilibrium.
  
  We prove the contrapositive of the theorem, i.e. we assume that
  $\game$ has more than one equilibrium, and we deduce that
  $\graph{\game}{\BPCdyn}$ has a fair cycle.

  To prove that $\graph{\game}{\BPCdyn}$ has a fair cycle, we proceed
  in two steps. First, we introduce a new dynamics $\leadsto$ and
  prove, by means of Lemma~\ref{Lemma:ab=ba}, that
  $\graph\game\leadsto$ has a non constant \Lfair\ cycle. Then, we define a
  partial simulation $\sqsubseteq$ of $\graph{\game}{\BPCdyn}$ by
  $\graph\game\leadsto$ and use Proposition~\ref{Cor:weak et
    co-weak=>same termination} to conclude that
  $\graph{\game}{\BPCdyn}$ has a cycle. Finally, we show that this
  cycle is fair.

  The new dynamics we consider takes explicitly into account the fact
  that, when a player updates its strategy, it might take some time
  for the other players to be notified. Hence, each state in this new
  dynamics stores what each Player $j$ believes about the current
  strategy of all Player $i$. Formally, let
  $\graph\game\leadsto =(V,\leadsto,L)$ be the labelled graph defined
  as follows.
  \begin{itemize}
  \item the set of states $V$ contains all states $v$ of the form
    $v=(\sigma_{i,j})_{1\le i,j\le n}$, i.e. a state associated a
    strategy $\sigma_{i,j}$ to each \emph{pair} of players $i$ and
    $j$. Intuitively, $\sigma_{i,j}$ represents what Player $j$
    \emph{believes} about the current strategy of Player $i$. In other
    words, for all $1\le j\le n$,
    $\sigma_j=(\sigma_{i,j})_{1\le i\le n}$ represents the belief of
    the player $j$ about the strategies of the other players. In
    particular, $\sigma_{j,j}$ is the current strategy of player
    $j$. We let
    \begin{align*}
      V_0&=\{v\in V\mid \forall 1\le j,j'\le n: (\sigma_{i,j})_{1\le i\le n}=(\sigma_{i,j'})_{1\le i\le n}\}
    \end{align*}
    be the set of states where every player knows the strategies of
    the other players;

  \item $L=\{0, 1,\ldots, n\}$, where $1,\ldots,n$ represents the players; and
  \item $v\overset{\ell}{\leadsto}v'$ iff:
    \begin{itemize}
    \item Either $\ell=0$, then $v'\in V_0$ and for all
      $1\le j\le n,$: $\sigma_{j,j}=\sigma'_{j,j}$. The $0$ change is
      an update of knowledge. All players learn the strategies of the
      other players. Notice that if $v\in V_0$, then $v\overset{0}{\leadsto}v$.
    \item Or $\ell\neq 0$, then for all $j\neq \ell$:
      $(\sigma_{i,j})_{1\le i\le n}=(\sigma'_{i,j})_{1\le i\le n}$ and
      the update of Player $\ell$ is performed according to $\POdyn$
      if possible. Formally,
      $(\sigma_{i,\ell})_{1\le i\le n}\POdyn(\sigma'_{i,\ell})_{1\le
        i\le n}$ with $\sigma_{\ell,\ell}\neq\sigma_{\ell,\ell}'$ if this is
      permitted by $\POdyn$, or
      $(\sigma_{i,\ell})_{1\le i\le n}=(\sigma'_{i,\ell})_{1\le i\le n}$
      otherwise. Such a move in the dynamics thus corresponds to
      Player $\ell$ updating his strategy according to $\POdyn$ if
      possible, or no update at all if not. This change is not yet
      learned by the other players.
  \end{itemize}
  \end{itemize}
  
  \begin{example}

  \begin{figure}\centering
\begin{tikzpicture}
	\draw (0,7.5) node [circle, double, draw, inner sep=0pt] (aa) {$
\begin{matrix}
\mathbf{c_1}&c_2\\
c_1&\mathbf{c_2}
\end{matrix}$};
	\draw (2.5,7.5) node [circle, draw, inner sep=0pt] (ab) {$\begin{matrix}
\mathbf{s_1}&c_2\\
c_1&\mathbf{c_2}
\end{matrix}$};
	\draw (5,7.5) node [circle, draw, inner sep=0pt] (ac) {$
\begin{matrix}
\mathbf{c_1}&c_2\\
c_1&\mathbf{s_2}
\end{matrix}$};
	\draw (7.5,7.5) node [circle, draw, inner sep=0pt] (ad) {$
\begin{matrix}
\mathbf{s_1}&c_2\\
c_1&\mathbf{s_2}
\end{matrix}$};

	\draw (0,5) node [circle, draw, inner sep=0pt] (ba) {$\begin{matrix}
\mathbf{c_1}&c_2\\
s_1&\mathbf{c_2}
\end{matrix}$};
	\draw (2.5,5) node [circle, double, draw, inner sep=0pt] (bb) {$
\begin{matrix}
\mathbf{s_1}&c_2\\
s_1&\mathbf{c_2}
\end{matrix}$};
	\draw (5,5) node [circle, draw, inner sep=0pt] (bc) {$\begin{matrix}
\mathbf{c_1}&c_2\\
s_1&\mathbf{s_2}
\end{matrix}$};
	\draw (7.5,5) node [circle, draw, inner sep=0pt] (bd) {$
	\begin{matrix}
\mathbf{s_1}&c_2\\
s_1&\mathbf{s_2}
\end{matrix}$};

	\draw (0,2.5) node [circle, draw, inner sep=0pt] (ca) {$\begin{matrix}
\mathbf{c_1}&s_2\\
c_1&\mathbf{c_2}
\end{matrix}$};
	\draw (2.5,2.5) node [circle, draw, inner sep=0pt] (cb) {$
\begin{matrix}
\mathbf{s_1}&s_2\\
c_1&\mathbf{c_2}
\end{matrix}$};
	\draw (5,2.5) node [circle, double, draw, inner sep=0pt] (cc) {$\begin{matrix}
\mathbf{c_1}&s_2\\
c_1&\mathbf{s_2}
\end{matrix}$};
	\draw (7.5,2.5) node [circle, draw, inner sep=0pt] (cd) {$
	\begin{matrix}
\mathbf{s_1}&s_2\\
c_1&\mathbf{s_2}
\end{matrix}$};

	\draw (0,0) node [circle, draw, inner sep=0pt] (da) {$\begin{matrix}
\mathbf{c_1}&s_2\\
s_1&\mathbf{c_2}
\end{matrix}$};
	\draw (2.5,0) node [circle, draw, inner sep=0pt] (db) {$
\begin{matrix}
\mathbf{s_1}&s_2\\
s_1&\mathbf{c_2}
\end{matrix}$};
	\draw (5,0) node [circle, draw, inner sep=0pt] (dc) {$\begin{matrix}
\mathbf{c_1}&s_2\\
s_1&\mathbf{s_2}
\end{matrix}$};
	\draw (7.5,0) node [circle, double, draw, inner sep=0pt] (dd) {$
	\begin{matrix}
\mathbf{s_1}&s_2\\
s_1&\mathbf{s_2}
\end{matrix}$};

	\draw[->] (aa) to node[above]{1}(ab);	
	\draw[->, bend left] (aa) to node[above]{2} (ac);
	\draw[->, loop left] (aa) to node[left]{0} (aa);
\end{tikzpicture}
\caption{Representation of a part of $\graph{\game^{DIS}}{\leadsto}$}
   \label{Fig:G^DISleadsto}
\end{figure}

  See \figurename~\ref{Fig:G^DISleadsto} for a representation of a part of $\graph{\game^{DIS}}\leadsto$. Every nodes of $\graph{\game^{DIS}}\leadsto$ are represented, but not every edges, for seeks of clarity.
  
   For example, the node \begin{tikzpicture}
	\draw (2.5,7.5) node [circle, draw, inner sep=0pt] (b) {$\begin{matrix}
\mathbf{s_1}&c_2\\
c_1&\mathbf{c_2}
\end{matrix}$}; 
\end{tikzpicture} represents the state where player $1$ plays $\mathbf{s_1}$ and believes that player $2$ plays $c_2$ (first line) and where player $2$ plays $\mathbf{c_2}$ and believes that player $1$ plays $c_1$ (second line).

The nodes of $V_0$ are represented by the double circle, where we can see that all the believes correspond to the real choice of the players.

Notice that, by definition, the graph is complete deterministic. 

\begin{minipage}{0.69\textwidth}As an example, we considere the three updates possible from \end{minipage}\begin{minipage}{0.2\textwidth}
 \begin{tikzpicture}
	\draw (2.5,7.5) node [circle, double, draw, inner sep=0pt] (b) {$\begin{matrix}
\mathbf{c_1}&c_2\\
c_1&\mathbf{c_2}
\end{matrix}$}; 
\end{tikzpicture}  
\end{minipage}

\begin{minipage}{0.3\textwidth}
\begin{tikzpicture}
	\draw (0,7.5) node [circle, double, draw, inner sep=0pt] (a) {$\begin{matrix}
\mathbf{c_1}&c_2\\
c_1&\mathbf{c_2}
\end{matrix}$}; 

	\draw (2.5,7.5) node [circle, double, draw, inner sep=0pt] (b) {$\begin{matrix}
\mathbf{c_1}&c_2\\
c_1&\mathbf{c_2}
\end{matrix}$};

\draw[->] (a) to node[above]{0}(b);	
\end{tikzpicture}\end{minipage}\begin{minipage}{0.69\textwidth} means that an update of knowledge will stay in the same state.\end{minipage}

\begin{minipage}{0.3\textwidth}
\begin{tikzpicture}
	\draw (0,7.5) node [circle, double, draw, inner sep=0pt] (a) {$\begin{matrix}
\mathbf{c_1}&c_2\\
c_1&\mathbf{c_2}
\end{matrix}$}; 

	\draw (2.5,7.5) node [circle, draw, inner sep=0pt] (b) {$\begin{matrix}
\mathbf{s_1}&c_2\\
c_1&\mathbf{c_2}
\end{matrix}$};

\draw[->] (a) to node[above]{1}(b);	
\end{tikzpicture} \end{minipage}\begin{minipage}{0.69\textwidth} means that player $1$ updates his strategy according to $\POdyn$, changing $c_1$ by $s_1$, but player $2$ has not been informed of this change yet. \end{minipage}

\begin{minipage}{0.3\textwidth}

\begin{tikzpicture}
	\draw (0,7.5) node [circle, double, draw, inner sep=0pt] (a) {$\begin{matrix}
\mathbf{c_1}&c_2\\
c_1&\mathbf{c_2}
\end{matrix}$}; 

	\draw (2.5,7.5) node [circle, draw, inner sep=0pt] (b) {$\begin{matrix}
\mathbf{c_1}&c_2\\
c_1&\mathbf{s_2}
\end{matrix}$};

\draw[->] (a) to node[above]{2}(b);	
\end{tikzpicture} 
\end{minipage}\begin{minipage}{0.69\textwidth} means that player $2$ updates his strategy according to $\POdyn$, changing $c_2$ by $s_1$, but player $1$ has not been informed of this change yet.

\end{minipage} 

Moreover, notice that the sinks are \begin{tikzpicture}
	\draw (2.5,7.5) node [circle, double, draw, inner sep=0pt] (b) {$\begin{matrix}
\mathbf{s_1}&c_2\\
s_1&\mathbf{c_2}
\end{matrix}$}; 
\end{tikzpicture} and \begin{tikzpicture}
	\draw (2.5,7.5) node [circle, double, draw, inner sep=0pt] (b) {$\begin{matrix}
\mathbf{c_1}&s_2\\
c_1&\mathbf{s_2}
\end{matrix}$}; 
\end{tikzpicture}, which correspond in a certain sense to the equilibria in $\graph{\game^{DIS}}\PCdyn$.
  \end{example}


  The rest of the proof will be as follows:
  \begin{enumerate}
  \item We check that $\graph\game\leadsto$ verifies the conditions in
    Lemma~\ref{Lemma:ab=ba}, and then has a \Lfair\ cycle;
  \item We define a partial simulation $\sqsubseteq$ of
    $\graph\game{\leadsto^+}$ by $\graph\game{\BPCdyn^+}$ of domain
    $V_0$;
  \item We deduce that $\graph\game{\leadsto^+}$ has a cycle
    containing elements of $V_0$;
  \item Using Proposition~\ref{Cor:weak et co-weak=>same termination},
    we deduce that $\graph\game{\BPCdyn^+}$ has a cycle;
  \item Then, we conclude that $\graph\game\BPCdyn$ has a fair cycle.
  \end{enumerate}

  \begin{enumerate}
  \item \begin{itemize}
    \item $\graph\game\leadsto$ is deterministic because the dynamics is
      Best Reply and the preferences are strict for a different neighbour.
    \item $\graph\game\leadsto$ is complete by definition.
    \item For $\sigma$ a terminal state of $\game$, let $v\in V_0$ such that $v=(\sigma{i,j})_{1\le i,j\le n}$ with $\sigma_{i,i}=\sigma_i$. Then $v$ is a sink of $\graph\game\leadsto$, as $v\overset{0}{\leadsto}$ and $\sigma$ is terminal for $\POdyn$. Then, as $\graph\game\POdyn$ as several terminal nodes, $\graph\game\leadsto$ has several sinks. 
    \item Let $v\in V$ and let $i,j\in L$. We will show that $\exists
      \mathcal{I},\ \mathcal{J}$ and $v^*$ such that
      $v\overset{i\mathcal{I}}{\leadsto} v^*$ and $v\overset{ji\mathcal{J}}{\leadsto}
      v^*$ \begin{itemize}
      \item If $i=j$, let $\mathcal{I}=\mathcal{J}=\varepsilon$. We have to prove that $\overset{i}{\leadsto} =\overset{ii}{\leadsto}$. Two cases are possible for $i$. Either $i=0$, which means a knowledges update, or $i>0$, which means a player update. In the first case, if $i=0$, $v\overset{0}{\leadsto}v^*$ means that $v^*\in V_0$, by definition of $\overset{0}{\leadsto}$. And as $v^*\in V_0,\ v^*\overset{0}{\leadsto}v^*$, then $v\overset{0}{\leadsto} v^*\overset{0}{\leadsto} v^*$, and then $\overset{0}{\leadsto} =\overset{0}{\leadsto}\overset{0}{\leadsto}$. In the second case, when $i>0,\ \overset{i}{\leadsto}$ represent the update of player $i$. As there is no other change between the two $\overset{i}{\leadsto}$, (neither player or knowledge update), and because the updates are best reply, it is clear that if $v\overset{i}{\leadsto}v^*$, then $v^*\overset{i}{\leadsto}v^*$. Then, once again, $\overset{i}{\leadsto}=\overset{i}{\leadsto}\overset{i}{\leadsto}$.
      \item If $i=0$ and $j\neq 0$. Let $I=ji$ and $J=ji$. \todo{on
          pourrait le faire formellement, mais ça serait super long !}
        Intuitively, $\overset{jiji}{\leadsto}$ means that player $j$
        changes his strategy, there is a knowledge update, then $j$
        changes again, then there is a knowledge update. It is clear
        that if we remove the first change of $j$, the result will be
        exactly the same,
        i.e. $\overset{jiji}{\leadsto}=\overset{iji}{\leadsto}$
      \item If $j=0$ and $i\neq 0$.  Let $I=jij$ and $J=j$. This is
        symmetrical to the previous point, then we have that
        $\overset{ijij}{\leadsto}=\overset{jij}{\leadsto}$.
      \item If $i\neq 0$ and $j\neq 0$. Then let $I=j$ and
        $J=i$. Since these updates are performed without any knowledge
        update, it is clear that
        $\overset{ij}{\leadsto}=\overset{ji}{\leadsto}$.
      \end{itemize}

    \item We still have to prove that there exists $v$ from which we can reach two different sinks. Let $\tau \in \pstrat$ and $\tilde{\tau}\in\pstrat$ two
      different equilibria of $\graph\game\BPCdyn$ and
      $t=(\tau_{i,j})_{1\le i,j\le n}
      ,\tilde{t}=(\tilde{\tau}_{i,j})_{1\le i,j\le n}\in V_0$ the associated sinks of $\graph\game\leadsto$.
      Let $v=(\sigma_{i,j})_{1\le i,j\le n}\in V$ such that
      $\forall 1\le i \le n,\ \sigma_{i,i}=\tau_i$ and
      $\forall j\neq i,\ \sigma_{i,j}=\tilde{\tau_i}$. In this state,
      all players play the first equilibrium $\tau$ but believe that
      the other players play the second equilibrium
      $\tilde{\tau}$. Clearly, $v\overset{0}{\leadsto} t$ and
      $v\overset{1,\ldots,n}{\leadsto} \tilde{t}$. Then $v$ can reachs two different sinks.
    \end{itemize}

    Then, by Lemma~\ref{Lemma:ab=ba}, $\graph\game\leadsto$ has a non constant \Lfair\ cycle.

  \item We define a partial simulation $\sqsubseteq$ of
    $\graph\game{\leadsto^+}$ by $\graph\game{\BPCdyn^+}$ of domain
    $V_0$ as follows:
    $(\sigma_{i,j})_{1\le i,j\le n}\sqsubseteq (\tau_i)_{1\le i \le
      n}$ if
    $\forall j,\ (\sigma_{i,j})_{1\le i \le n}=(\tau_i)_{1\le i \le
      n}$. The proof that $\sqsubseteq$ is a partial simulation of
    $\graph\game{\leadsto^+}$ by $\graph\game{\BPCdyn^+}$ comes
    directly from the definition of $\BPCdyn$ and~$\leadsto$.\todo{A
      justifier plus? Ca sera long quand même... M.}

  \item Clearly, if $\graph\game\leadsto$ has an $\Lfair$ cycle, it
    means that this cycle contains elements of $V_0$, by definition of
    $\overset{0}{\leadsto}$. Then $\graph\game{\leadsto^+}$ contains a non constant
    cycle with only states of $V_0$. Intuitively, this means that the
    cycle visits states where all the Players are perfectly informed
    about the strategies of all other players, which are the states
    that `corresponds' to those in $\graph\game{\BPCdyn^+}$.

  \item By Proposition~\ref{Cor:weak et co-weak=>same termination}, it
    means that $\graph\game{\BPCdyn^+}$ has a cycle.

  \item As the first cycle built in $\graph\game\leadsto$ was \Lfair,
    it means that, during this cycle, every player has changed his
    strategy, or at least has had the opportunity to change it. It
    means that, in $\graph\game\BPCdyn$, the cycle is fair.\todo{M:pas
      formel du tout mais ça serait si long à faire...}\qedhere
  \end{enumerate}

\end{proof}

\begin{proof}[Proof of Theorem~\ref{Thm:Eqiv condition neighbour}]
  Let $\game$ be a \NOTG. We have to prove that $\PCdyn$ does not fairly terminate for
  $\game$ if and only if $\game$ has an SDW.

In \cite{Griffin02}, Griffin et al. prove a stronger result than
Theorem~\ref{Thm:Griffin}. Indeed, they actually prove that if $\BPCdyn$
  fairly terminates for a \OTG $\game$, then $\game$ has a DW such that: 
\begin{enumerate}
\item $\forall u_i\in U,\ u_i\notin \pi_j\in P,$ if $j\neq i$;
\item $\forall v\notin U,\ v\notin \pi_i\cap h_j,$ for $\pi_i\in P$ and $h_j\in H$;
\item $\forall v\in \pi_i\cap \pi_j$, $\pi_i(v)=\pi_j(v)$.
\end{enumerate} We call such a dispute wheel a DW+.  We show that if a
game has a DW+, this implies the existence of a SDW by showing the
remaining properties:

\begin{enumerate}\setcounter{enumi}{3}
\item $\forall u_i\in U,\ u_i\notin h_j\in H,$ if $j\neq i$ and
  $j\neq i-1$.
\item $\forall v\notin U,\ v\notin h_i\cap h_j,$ for $h_i,\ h_j\in H$
  with $i\neq j$.
\end{enumerate}
Moreover, Theorem~\ref{Thm:DWAB=>not terminates} tells us that having
an SDW implies that $\graph\game\PCdyn$ has a cycle. We conclude the
proof by showing that if $\graph\game\PCdyn$ has a cycle, it has a
fair cycle, because if a player which was not allowed to change during
the non fair cycle decides to change, it will not influence the
choices of the other players, then the cycle remains.

Let us prove this :

\begin{itemize}
\item If $\game$ has a DW+, then we can build a DW such that
  $\forall u_i\in U,\ u_i\notin h_j\in H,$ if $j\neq i$ and
  $j\neq i-1$. On the contrary, let us suppose that $\exists u_i\in U$
  such that $\exists h_j\in H$ with $j\neq i-1;\ j\neq i$ such that
  $u_i\in h_j$. It means that
  $\pi_j=u_jp_j\ldots 0\prec_j u_jr_j\ldots u_i\ldots
  u_{j+1}p_{j+1}\ldots 0=h_j\pi_{j+1}$. By conditions over the
  preferences of the game, it means that, for $h'_j=u_jr_j\ldots p_i$,
  $\pi_j=u_jp_j\ldots 0\prec_j u_jr_j\ldots u_ip_i\ldots
  0=h'_j\pi_i$. Let $\Pi'=(\overrightarrow{U}',P',H')$ where
  $\overrightarrow{U}'=u_1,\ldots, u_j,u_i,\ldots, u_n\ ;\
  P'=\pi_1,\ldots, \pi_j,\pi_i,\ldots,\pi_n\ ;\ H'=h_1,\ldots
  h'_jh_i,\ldots, h_n$. Notice that this DW is still a DW+.
\item If $\game$ has a DW+ satisfying condition 4, then we can build
  a DW such that $\forall v\notin U,\ v\notin h_i\cap h_j,$ for
  $h_i,\ h_j\in H$ with $i\neq j$.  Suppose that
  $\exists v\notin \overrightarrow{U}$ such that $v\in h_i \cup
  h_j$. Let $h_i=u_ir_i\ldots v v_i\ldots u_{i+1}$ and
  $h_j=u_jr_j\ldots v v_j\ldots u_{j+1}$. In particular, as
  $\pi_j\prec_j u_jr_j\ldots v v_j\ldots u_{j+1}p_{j+1}\ldots
  0=h_j\pi_{j+1}$, by conditions over the preferences, it means that,
  for $h'_j=u_jr_j\ldots v v_i\ldots u_{i+1}$, then
  $\pi_j\prec_j u_jr_j\ldots v v_i\ldots u_{i+1}p_{i+1}\ldots
  0=h_j\pi_{j+1}=h'_j\pi_i$. Let
  $\Pi'=(\overrightarrow{U}',P',\overrightarrow{H}')$ where

  $\overrightarrow{U}'=u_1,\ldots, u_j,u_{i+1},\ldots, u_n\ ;\
  P'=\pi_1,\ldots, \pi_j,\pi_{i+1},\ldots,\pi_n\ ;\ H'=h_1,\ldots
  h'_jh_{i+1},\ldots, h_n$. Note that this DW is still a DW+
  satisfying condition 4.

  By iterating this process, we can build a \DWAB.

\item First of all, let $\sigma_1\to\ldots\to \sigma_n$, a (not fair)
  cycle in $G$. Let us suppose that a player $i$ cannot change his
  strategy during this cycle.  Let $\sigma_{1,i}$ the best reply of
  player $i$ to $\sigma_1$. By hypothesis of Neighbour game,
  $\tilde{\sigma_1}\to \tilde{\sigma_n}$ is a fair cycle, with
  $\tilde{\sigma{j,i}}=\sigma_{1,i}$ and
  $\forall k\neq i,\ \tilde{\sigma{j,k}}=\sigma{j,k}$\qedhere
\end{itemize}
\end{proof}

\begin{proof}[Proof of Proposition~\ref{prop:disagree}]
  By Theorem~\ref{Thm:Eqiv condition neighbour}, if $\game^{\mathrm{DIS}}$ is a
  minor of $\game$, then $\PCdyn$ does not fairly terminates for
  $\game$. Moreover, still by Theorem~\ref{Thm:Eqiv condition
    neighbour}, if $\PCdyn$ does not fairly terminate, then it means
  that $\game$ has an SDW. It remains to prove that if $\game$ has an
  SDW, $\game$ has $\game^{\mathrm{DIS}}$ as a minor.

  Let $D=(U,P,H)$ be the SDW of $\game$ with
  $U=(u_1,\ldots,u_k),\ P=(\pi_1,\ldots,\pi_k)$ and
  $H=(h_1,\ldots,h_k)$. If $k>2,$ let
  $D'=((u_1,u_2),(\pi_1,\pi_2),(h_1,h_2'))$ with
  $h_2'=h_2h_3\ldots h_k$. We do have that $\pi_1\prec_1 h_1\pi_2$ and
  $\pi_2\prec_2h_2'\pi_1\eq_2 h_2\pi_3$ because $\game$ is an
  \NOTG. In particular, $\game^{\mathrm{DIS}}$ is a minor of $\game$.
%
%
%
%
%
%
\end{proof}

\end{document}